\newtheorem{theorem}{Theorem}
\newtheorem{problem}{Problem}
\newtheorem{definition}{Definition}
\newtheorem{claim}{Claim}
\newtheorem{lemma}{Lemma}
\newtheorem{remark}{Remark}
\newtheorem{example}{Example}
\DeclareMathOperator*{\argmax}{arg\,max}
\let \VEC \mathbf
\long\def\comment#1{}
\let \lessthan \prec
\let \morethan \succ
\newcounter{l1}
\newcounter{l2}
\newcounter{l3}
\newcommand{\bdotlist}{\begin{list}{$\bullet$}{}}
\newcommand{\bboxlist}{\begin{list}{$\Box$}{}}
\newcommand{\bbboxlist}{\begin{list}{\raisebox{.005in}{{\tiny $\blacksquare$ \ \ }}}{}}
\newcommand{\bdashlist}{\begin{list}{$-$}{} }
\newcommand{\blist}{\begin{list}{}{} }
\newcommand{\barablist}{\begin{list}{\arabic{l1}}{\usecounter{l1}}}
\newcommand{\balphlist}{\begin{list}{(\alph{l2})}{\usecounter{l2}}}
\newcommand{\bAlphlist}{\begin{list}{\Alph{l2}.}{\usecounter{l2}}}
\newcommand{\bdiamlist}{\begin{list}{$\diamond$}{}}
\newcommand{\bromalist}{\begin{list}{(\roman{l3})}{\usecounter{l3}}}
\pgfplotsset{compat=newest}
\pgfplotsset{plot coordinates/math parser=false}
\begin{document}

\title{
Duration-Differentiated Services in Electricity
}

\author{Ashutosh Nayyar,  Matias Negrete-Pincetic, Kameshwar Poolla and Pravin Varaiya
\thanks{The first author is with Ming Hsieh Department of Electrical Engineering at University of Southern California. The second, third and fourth authors are with the Department of Electrical Engineering and Computer Sciences, University of California, Berkeley. The second author is also with the Department of Electrical Engineering of Pontificia Universidad Catolica de Chile. This work was supported in part by EPRI and CERTS under sub-award 09-206; PSERC S-52; NSF under Grants 1135872, EECS-1129061, 
CPS-1239178, and CNS-1239274; the Republic of Singapore's National Research Foundation through a grant to the Berkeley Education Alliance for Research in Singapore for the SinBerBEST Program;
Robert Bosch LLC through its Bosch Energy Research Network funding program.
}}

\maketitle
\thispagestyle{empty}
\pagestyle{empty}

\begin{abstract}


The integration of renewable sources poses challenges at the operational and economic levels of the power grid. In terms of keeping the balance between supply and demand, the usual scheme of \textit{supply following load} may not be appropriate for large penetration levels of uncertain and intermittent renewable supply. In this paper, we focus on an alternative scheme in which the \textit{load follows the supply}, exploiting the flexibility associated with the demand side. We consider a model of flexible loads that are to be serviced by zero-marginal cost renewable power together with conventional generation if necessary.  Each load demands 1 kW for a specified number of time slots within an operational period. The flexibility of a load resides in the fact that the service may be delivered over any slots within the operational period.  Loads therefore require flexible energy services that are differentiated by the demanded duration. We focus on two problems associated with durations-differentiated loads. The first problem deals with the operational decisions that a supplier has to make to serve a given set of duration differentiated loads. The second problem focuses on a market implementation for duration differentiated services. We give necessary and sufficient conditions under which the available power can service the loads, and we describe  an algorithm that constructs an appropriate allocation. In the event the available supply is inadequate, we  characterize the minimum amount of power that must be purchased to service  the loads.
Next we consider a forward market where consumers can purchase duration differentiated energy services.  We first characterize social welfare maximizing allocations in this forward market and then show the existence of  an efficient competitive equilibrium. We also investigate the competitive equilibrium  in a sequence of  real-time spot markets with flexible loads. We show by an example that the sequence of real-time markets may not be efficient.

\end{abstract}

\section{Introduction} \label{sec-introduction}
The worldwide interest in renewable energy such as wind and solar is driven by pressing environmental problems, energy supply security and nuclear power safety concerns. The energy production from these renewable sources is {\em variable}: uncontrollable, intermittent, uncertain. 
Variability is a challenge to deep renewable integration.

A central problem is that of economically balancing demand and supply of electricity in the 
presence of large amounts of variable generation. The current {\em supply side} approach 
is to absorb the  variability in operating reserves.
Here, renewables are treated as negative demand, so the variability appears as uncertainty in net load which is compensated by scheduling  fast-acting reserve generation.  This strategy of {\em tailoring supply to meet demand} works
at today’s modest penetration levels. But it will not scale. Recent studies in California, e.g., \cite{CAISO},  project that the load-following capacity requirements will need to increase from 2.3 GW to 4.4 GW. These large increases in reserves will significantly raise electricity cost, and diminish the net carbon benefit from renewables as argued in several papers in the literature (\cite{kirschen2010,negwankowshamey12}).

There is an emerging consensus that demand side resources must play a key role in supplying zero-emissions regulation services that are necessary for deep renewable integration  (e.g., \cite{cal09K, galus2010, papaoren2010, matdyscal12K, anand2012}). These include thermostatically controlled loads (TCLs), electric vehicles (EVs), and smart appliances.  Some of these loads are deferrable: they can be shifted over time.  For example, charging of electrical vehicles (EVs) may be postponed to some degree.    Other loads such as HVAC units can be modulated within limits.  
The core idea of {\em demand side approaches} to renewable integration is to exploit load flexibility to 
track variability in supply, i.e., to tailor demand to match supply.  For this a cluster manager or aggregator 
offers a control and business interface between the loads and the system operator (SO). 

The demand side approach has led to two streams of work: (a) indirect load control (ILC) where flexible loads respond, in real-time, to price proxy signals, and (b) direct load control (DLC) where flexible loads cede physical control of devices to operators who determine appropriate actions. The advantage of DLC is that with greater control authority the cluster manager can more reliably control the aggregate load.  
However, DLC requires a more extensive control and communication infrastructure and the manager must provide economic incentives to recruit a sufficient consumer base. The advantage of ILC is that the consumer retains authority over her electricity consumption.  


Both ILC and DLC require appropriate economic incentives for the consumers.  In ILC, the real-time price signals provides the required incentives. However, the quantification of those prices, the feasibility of consumer response and the impact on the system and market operations in terms of price volatility and instabilities is a matter of concern, as recent literature suggest (\cite{RoozbehaniCDC2010,LBNL-RTPReport,wannegkowshameysha11b}).  

DLC also requires the creation of economic signals, but unlike real-time pricing schemes DLC can  use forward markets. For DLC to be effective, it is necessary to offer consumers who present greater demand flexibility a larger discount. The discounted pricing can be arranged through flexibility-differentiated electricity markets. Here, electricity is regarded as a set of differentiated services as opposed to a homogeneous 
commodity.  Consumers can purchase an appropriate bundle of services that best meets their electricity needs.
From the producer's perspective, providing differentiated services may better accommodate supply variability. 
This paper is concerned with electric power services differentiated by the duration $h$ for which power is supplied. 
We explore balancing supply and demand for such services through forward markets.
There is a growing body of work \cite{tanvar93, pravin2011,negmey12, bitar2013} on differentiated electricity services.

\subsection{Prior Work} \label{sec-prior}

\subsection*{Supply side approaches} 

Here, variable supply from renewable sources is regarded as negative demand.
The objective is to arrange for reserve power generation to compensate for fluctuations in net demand.  
The problem is formulated from the viewpoint of the system operator (SO)
who must purchase reserve generation capacity and energy to meet the random demand while 
minimizing the risk of mismatch and the cost of reserves. 
Reserve generation can be purchased in forward markets with different time horizons (day-ahead, hour-ahead, 5 minutes-ahead).
With shorter time horizons, the uncertainty in the forecast net demand is reduced but the cost of reserves increases.  
The SO's  optimal decision can be formulated as a stochastic control problem known as risk-limiting dispatch presented in \cite{pravin2011}.  
When SO's decisions include unit commitment  and transmission constraints, the problem 
is a mixed-integer nonlinear stochastic programming problem that is computationally challenging as discussed in \cite{galiana}. A number of papers address the
computational aspects of stochastic unit commitment (e.g., \cite{carpentier1996, nowak2000, paporeone2011}). Alternatively, \cite{berlitsun2013} present a robust optimization formulation of the unit commitment problem. If unit commitment and transmission constraints are omitted, the resulting stochastic dispatch problem has an analytical solution as shown in \cite{varaiya_rld, PMAPS2012}. 

\subsection*{Demand side approaches}

Current research in {\em direct load control}  focuses on developing and analyzing algorithms for coordinating resources (e.g., \cite{ChenLLF2011, LeeTPS2008, HsuDLC1991, MetsEVCharging2010}). For example, \cite{GanPESGM2012} develops a distributed scheduling protocol for electric vehicle charging; \cite{papaoren2010} uses approximate dynamic programming to couple wind generation with deferrable loads; and \cite{MaIFAC2011, GHGIREP2010, galus2010} suggest the use of receding horizon control approaches for resource scheduling.

Recent studies in {\em indirect load control} have developed real-time pricing algorithms \cite{IlicTPSNov2011} and
quantified operational benefits \cite{LijesenRTPElasticity}.  There has also been research focused on economic efficiency in \cite{Borenstein2005,Spees2007}, feedback stability of price signals in \cite{RoozbehaniCDC2010}, volatility of real-time markets in \cite{wannegkowshameysha11b} as well as the practical issues associated with implementing ILC programs presented in \cite{LBNL-RTPReport}. 

An early exposition of differentiated energy services is offered in \cite{OrenSmith}.
There are other approaches to such services that naturally serve to integrate variable generation sources. 
Reliability differentiated energy services where consumers accept contracts for $p$ MW of power with probability $\rho$ are developed in \cite{tanvar93}. More recently, the works of \cite{bitarlow2012, bitar2013} consider deadline differentiated contracts where consumers receive price discounts for offering larger windows for the delivery of $E$ MWh of energy. 

\subsection{Main Contributions}

In this paper, we consider a class of flexible loads that require a fixed power level for a specified duration within an operational period. The loads are differentiated based on the duration of service they require, we refer to them as duration-differentiated (DD) loads. We consider a stylized version of DD loads. The service interval is divided into $T$ slots, indexed $t = 1, \cdots, T$. A flexible load demands 1 kW of power for a duration of $h$ slots.  While this abstraction does not account for many important practical constraints, it serves to formulate and study the central mathematical problems in scheduling/control and markets for DD loads. The flexibility of a load resides in the fact that {\em any} $h$ of the available $T$ slots will satisfy the load. Examples of DD loads include electric vehicles that allow flexible charging over an 8 hour service interval, aluminum smelters that might operate for $h$ hours out of $24$, appliances such as washing machines that require a fixed power for any $1$ hour out of the next $8$. 



Our objective is to study the allocation of available supply to the various loads in a market context. 
We assume that if the supplier contracts {\em ex ante} to deliver power for $h$ slots to a particular flexible load, it is obligated to do so. The supplier selects $h$ of the the available $T$ slots to supply power to the load. This scheme requires certain technology infrastructure (communication, power electronics), a treatment of which is outside the scope of this paper. The load is not informed much in advance which $h$ slots it will receive power. Thus the load must assume the burden of planning its consumption without knowing exactly when power will be available. The available power is drawn from zero-marginal  cost renewable sources as well as electricity purchases made by the supplier in the day-ahead. Because of the variability in renewable sources, the supplier may be compelled to use supplemental generation such as on-site gas turbine or buying from a real-time market to meet its obligations.


Our first set of results are contained in Section \ref{sec:discrete}. Here, we study the decision problems faced by a supplier who has to serve a fixed set of DD loads.  The basic question we address is the following: given a forecast model of its renewable generation, what day-ahead and real time power purchases should a supplier make to ensure that all loads are served at least possible cost? To solve this problem, 
we first give a necessary and sufficient condition under which the available power $\mathbf{p}= (p_1, p_2, \ldots,p_T)$ is adequate to meet the loads. We describe a Least Laxity First (LLF) algorithm that constructs an appropriate allocation to serve the loads.   
In the event the available supply profile $\mathbf{p}$ does not meet the adequacy condition, we characterize the minimum cost power purchase decisions under (a) oracle information, and (b) run-time information about the supply. We use this solution to construct optimal day-ahead and real-time decisions for the supplier in Section~\ref{sec:supplier_problem}. 

Our second set of results may be found in Section \ref{sec:market}. Here, we consider a stylized {\em forward market for duration-differentiated services}, which are bundles of  $h$ 1-kW slots sold at prices $\pi(h)$, $h=1,2, \cdots$.  
Consumer $i$ select the service $h$ that maximizes her net utility $U(h) - \pi(h)$,
and the supplier bundles its supply  (both its renewable generation and any forward purchases made from the grid) into 
$n_h$ units of service for $h$ slots, so as to maximize its revenue.  We show that there is a competitive market equilibrium which maximizes social welfare.  The competitive duration-differentiated market equilibrium is then compared with a sequential real-time market, 
in which the price of power $\zeta (t)$ is the market clearing price for slot $t$.  
The comparison reveals that the real-time markets may not be efficient.
All proofs are collated in the Appendix. Concluding remarks and future research avenues are discussed in Section \ref{sec-conclusions}.


\subsection*{Notation}

Bold letters denote vectors. We reserve subscripts $t$ to index time and $i$ to index loads.
For a vector $\VEC a =(a_1,a_2,\ldots,a_T)$, $\VEC a^{\downarrow}$ denotes the non-increasing 
rearrangement of $\VEC a$, so, $a^{\downarrow}_t \geq a^{\downarrow}_{t+1}$ for $t=1,2,\ldots,T-1$.
For an assertion $A$, $\mathds{1}_{A}$ denotes $1$ if $A$ is true and $0$ if $A$ is false.



%

\section{Problem Statement}\label{sec:PF}
We formalize the two problems investigated in this work. The first problem deals with the operational decisions that a supplier has to make to serve a given set of duration differentiated loads. The second problem focuses on a market implementation for duration differentiated services.


\subsection{Serving a collection of duration differentiated loads}
We consider a discrete-time setting where time is segmented into $T$ slots, indexed by $t$.   The power available in slot $t$ is assumed to take non-negative integer values. 
There are  $N$ flexible loads, indexed by $i$. Load $i$ requires $1$ kW for {\em any}  $h_i$  of $T$ time slots.  The vector $\VEC h = (h_1,h_2,\ldots,h_N)$ is called the \emph{demand profile}.  For a demand profile $\VEC h$, we define an associated \emph{demand-duration vector} $\VEC d = (d_1,d_2,\ldots,d_T)$ where $d_t := \sum_i\mathds{1}_{\{t \leq h_i\}}$. Note that the number of consumers that need service for $t$ slots is  $d_t - d_{t+1}$ (where $d_{T+1} := 0$). Thus, there is a bijection between the demand profile and the demand-duration vector as $\VEC h$ specifies $\VEC d$ uniquely and {\em vice versa}.  

We consider the supplier's problem of serving a given collection of these  loads. The supplier owns renewable energy resources that can provide free but uncertain power. In addition, the supplier can purchase power in the day-ahead and real-time electricity markets. The renewable generation over the delivery period is denoted by the  non-negative integer-valued random vector  $\VEC R = (R_1,R_2,\ldots,R_T)$; the realizations of this vector are denoted by $\VEC r$. In the day-ahead market, the supplier has a probability mass function on this random vector given by $f(\VEC r)$. If the supplier purchases a power profile $\VEC y = (y_1,y_2,\ldots,y_T)$ in the day-ahead market, its cost is given by $c^{da}(\sum_{t=1}^T  y_t)$, where $c^{da}$ is day-ahead market price. In real-time, the random vector of renewable generation takes a realization $\VEC r$, so that the total realized supply profile is $\VEC p = \VEC r+ \VEC y$. Since, this supply may not be adequate for serving all the loads, the supplier may have to purchase additional energy at the real-time market price $c^{rt}$. We assume that over the time of actual delivery, the supplier sequentially observes the true realizations of its renewable  power in each slot and makes a real-time purchase decision $a_t$ in slot $t$ according to a \emph{decision policy} of the form
 \begin{equation}
 a_t = g_t(\VEC y, r_1,r_2,\ldots,r_t)
 \end{equation}
 
 For a given day-ahead purchase decision $\VEC y$ and a real-time decision rule $\VEC g = (g_1,g_2,\ldots,g_T)$, the supplier's total expected cost is given as
 \begin{equation} \label{eq:expected_cost}
  \mathcal{J}(\VEC y, \VEC g) := c^{da}\sum_{t=1}^T y_t + c^{rt}\mathds{E} \Bigg[ \sum_{t=1}^T g_t(\VEC y, R_1,R_2,\ldots,R_t)\Bigg],
 \end{equation}
 where the expectation is over the random vector $\VEC R$. The supplier's problem can be stated as follows.
 \begin{problem}
 What choice of the  day-ahead purchase decision $\VEC y$ and the real-time decision rule $\VEC g = (g_1,g_2,\ldots,g_T)$ minimizes the total expected cost given by \eqref{eq:expected_cost} while ensuring that all loads are served?
 \end{problem}

\subsection{Forward market for flexible services}
In Problem 1, the set of duration-differentiated  loads was assumed to be fixed. We would like to investigate  how this set of duration-differentiated loads results from a market interaction between consumers and suppliers. 
We consider a forward market for duration-differentiated services.   All the market transactions between the consumers and the suppliers are completed before the operational period. 
The forward market has three elements:

\balphlist
\item Services: The services are differentiated by the number $h$ of time slots within $T$ slots during which 1 kW of electric power is to be delivered.  Service $h$ is sold at price $\pi(h)$.
\item Consumers: The benefit to a consumer who receives $h$ slots is  $U(h)$; all consumers have the same utility function $U$. 
\item Supplier: The supplier receives for free the power with profile $\VEC R = (R_1, \cdots, R_T)$.    The supplier cost is given by $\mathcal{J}(\VEC y, \VEC g)$ as defined by \eqref{eq:expected_cost} in the previous section.
\end{list}

In order to make the analysis tractable, we focus on idealized market structure  and supplier behavior. In particular, we assume  a competitive market setting in which all agents act as price takers. We also assume that while making their market decisions, suppliers ignore the forecast errors about their renewable supply. 
In this setting, we consider the following problem.
\begin{problem}
 Is there a competitive equilbrium in the forward market for duration-differentiated energy services? Is the equilibrium efficient, that is, does it maximize social welfare?
\end{problem}
\section{Adequacy Results}\label{sec:discrete} 
In order to address Problem 1, we start with providing a characterization of the set of supply profiles that are adequate for a given collection of loads.   We then provide a closed-form expression for the minimum amount of power needed to make an inadequate supply adequate. We use these results to identify the optimal   day-ahead and real-time purchase decisions for the supplier.


\begin{figure*}[ht]
\centering
\begin{tikzpicture}[yscale=0.25, xscale = 0.165]
\draw [xstep=4,ystep=2, thin, dotted] (0,0) grid (24,10);
\node[anchor=west] at (24,1) {$h_1 = 6$};
\node[anchor=west] at (24,3) {$h_2 = 3$};
\node[anchor=west] at (24,5) {$h_3 = 2$};
\node[anchor=west] at (24,7) {$h_4 = 2$};
\node[anchor=west] at (24,9) {$h_5 = 1$};
\node[anchor=north, rotate=90] at (32,5) {Demand profile $h$};
\node[anchor=east, rotate=90] at (2,0) {$d_1 = 5$};
\node[anchor=east, rotate=90] at (6,0) {$d_2 = 4$};
\node[anchor=east, rotate=90] at (10,0) {$d_3 = 2$};
\node[anchor=east, rotate=90] at (14,0) {$d_4 = 1$};
\node[anchor=east, rotate=90] at (18,0) {$d_5 = 1$};
\node[anchor=east, rotate=90] at (22,0) {$d_6 = 1$};
\node[anchor=north] at (12,-5) {Demand duration $d$};
\draw [fill=blue, fill opacity = 0.2, thick] (0,0) rectangle (24,2);
\draw [fill=green, fill opacity = 0.2, thick] (0,2) rectangle (12,4);
\draw [fill=red, fill opacity = 0.2, thick] (0,4) rectangle (8,6);
\draw [fill=yellow, fill opacity = 0.2, thick] (0,6) rectangle (8,8);
\draw [fill=cyan, fill opacity = 0.2, thick] (0,8) rectangle (4,10);

\draw [xstep=4,ystep=2, thin, dotted] (44-4,0) grid (68-4,10);
\draw [fill=blue, fill opacity = 0.2, thick] (44-4,0) rectangle (68-4,2);
\draw [fill=green, fill opacity = 0.2, thick] (48-4,2) rectangle (56-4,4);
\draw [fill=green, fill opacity = 0.2, thick] (60-4,2) rectangle (64-4,4);
\draw [fill=red, fill opacity = 0.2, thick] (48-4,4) rectangle (56-4,6);
\draw [fill=yellow, fill opacity = 0.2, thick] (48-4,6) rectangle (52-4,8);
\draw [fill=cyan, fill opacity = 0.2, thick] (48-4,8) rectangle (52-4,10);
\draw [fill=yellow, fill opacity = 0.2, thick] (64-4,2) rectangle (68-4,4);
\node[anchor=east, rotate=90] at (46-4,0) {$p_1 = 1$};
\node[anchor=east, rotate=90] at (50-4,0) {$p_2 = 5$};
\node[anchor=east, rotate=90] at (54-4,0) {$p_3 = 3$};
\node[anchor=east, rotate=90] at (58-4,0) {$p_4 = 1$};
\node[anchor=east, rotate=90] at (62-4,0) {$p_5 = 2$};
\node[anchor=east, rotate=90] at (66-4,0) {$p_6 = 2$};
\node[anchor=north] at (56-4,-5) {Adequate supply profile $p$};


\draw [xstep=4,ystep=2, thin, dotted] (76-4,0) grid (100-4,10);
\draw [fill=blue, fill opacity = 0.2, thick] (44+28,0) rectangle (64+28,2);
\draw [fill=green, fill opacity = 0.2, thick] (44+28,2) rectangle (56+28,4);
\draw [fill=yellow, fill opacity = 0.2, thick] (56+28,2) rectangle (60+28,4);
\draw [fill=brown, fill opacity = 0.2, thick] (60+28,2) rectangle (64+28,4);
\draw [fill=red, fill opacity = 0.2, thick] (48+28,4) rectangle (56+28,6);
\draw [fill=yellow, fill opacity = 0.2, thick] (48+28,6) rectangle (52+28,8);
\draw [fill=cyan, fill opacity = 0.2, thick] (48+28,8) rectangle (52+28,10);
\node[anchor=east, rotate=90] at (46+28,0) {$p_1 = 2$};
\node[anchor=east, rotate=90] at (50+28,0) {$p_2 = 5$};
\node[anchor=east, rotate=90] at (54+28,0) {$p_3 = 3$};
\node[anchor=east, rotate=90] at (58+28,0) {$p_4 = 2$};
\node[anchor=east, rotate=90] at (62+28,0) {$p_5 = 2$};
\node[anchor=east, rotate=90] at (66+28,0) {$p_6 = 0$};
\node[anchor=north] at (56+28,-5) {Inadequate supply profile $p$};

\end{tikzpicture}
\caption{Characterizing duration-differentiated loads (left), adequate supply profile (center), inadequate supply profile (right).}
\label{fig-demandprofile}
\end{figure*}
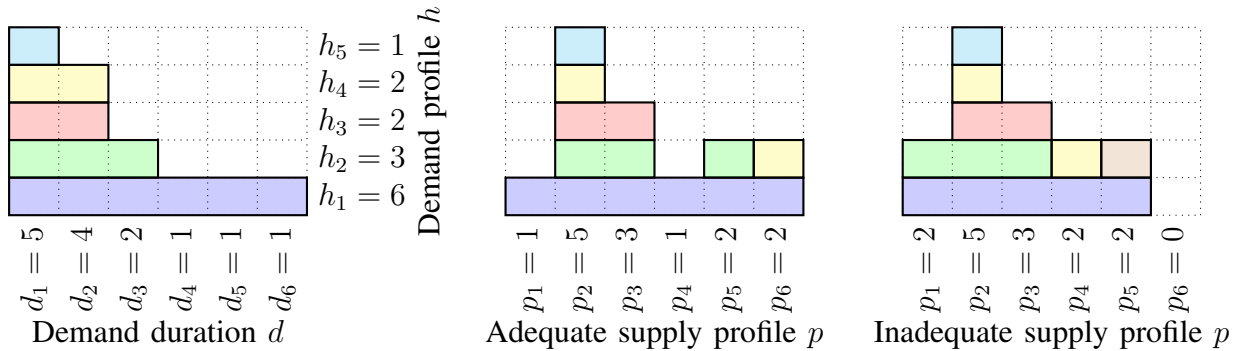

\subsection{Adequacy conditions}
Let $p_t$ be the power available  in time slot $t$. The vector $\VEC p = (p_1,p_2,\ldots,p_T)$ is  the \emph{supply profile}. We identify necessary and sufficient conditions under which $\VEC p$ is adequate for the given collection of loads.  Any allocation of supply to loads can be specified by a binary {\em allocation rule} $A \in \{0,1\}^{N \times T} $ where
$A(i,t) = 1$ if and only if load $i$ receives power in slot $t$.  We  define two notions of supply adequacy.
\begin{definition}[Adequacy] {\rm 
The supply profile $\VEC p=(p_1,\cdots,p_T)$ is \emph{adequate} for the demand profile $\VEC h = (h_1,\cdots,h_N)$ if there exists an allocation rule $A(\cdot,\cdot)$ such that 
\[ \sum_t A(i,t) = h_i, \]
\[ \sum_i A(i,t) \leq p_t.\]
If further,  $\displaystyle{\sum_i A(i,t) = p_t}$, 
we will say that $\VEC p=(p_1,\cdots,p_T)$ is  \emph{exactly adequate} for  $\VEC h = (h_1,\cdots,h_N)$.}
\end{definition}

\begin{example}{\rm Consider $T=6$ time slots and $N=5$ flexible loads as illustrated in Figure \ref{fig-demandprofile}. 
If the demand profile is $\VEC h = (1,2,2,3,6)$, the associated demand-duration vector is $\VEC d = (5,4,2,1,1,1)$.  
The supply profile shown in the center panel is exactly adequate to service the loads. The supply profile shown in the right panel has the same total energy, but it is inadequate to service the loads.}
\end{example}

The following lemma is a direct consequence of the above definition. 
\begin{lemma}
If $\VEC p$ is (exactly) adequate for a demand profile $\VEC h$, then any temporal rearrangement of $\VEC p$ is also (exactly) adequate for the same demand profile $\VEC h$.
\end{lemma}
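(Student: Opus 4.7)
The plan is to exploit the fact that the definition of adequacy only constrains the row sums $\sum_t A(i,t) = h_i$ and the per-slot column totals $\sum_i A(i,t) \leq p_t$; it is invariant under relabeling of the time slots, because the demand profile $\VEC h$ does not itself specify which particular slots load $i$ must be served in.

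Concretely, let $\VEC p' = (p'_1,\ldots,p'_T)$ be a temporal rearrangement of $\VEC p$. Then there exists a permutation $\sigma$ of $\{1,2,\ldots,T\}$ such that $p'_t = p_{\sigma(t)}$ for every $t$. By hypothesis, there is an allocation rule $A$ with $\sum_t A(i,t) = h_i$ and $\sum_i A(i,t) \leq p_t$ (respectively $=p_t$ in the exactly adequate case). Define a new allocation rule by $A'(i,t) := A(i,\sigma(t))$.

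I would then verify the two required properties directly. For the row sums, $\sum_{t=1}^T A'(i,t) = \sum_{t=1}^T A(i,\sigma(t)) = \sum_{s=1}^T A(i,s) = h_i$, using that $\sigma$ is a bijection on $\{1,\ldots,T\}$. For the column constraints, $\sum_i A'(i,t) = \sum_i A(i,\sigma(t)) \leq p_{\sigma(t)} = p'_t$, with equality in place of the inequality when $\VEC p$ is exactly adequate for $\VEC h$. Hence $A'$ witnesses that $\VEC p'$ is (exactly) adequate for $\VEC h$.

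There is no real obstacle here: the statement is essentially a restatement of the symmetry built into the definition, and the proof is a one-line relabeling argument. The only point worth being careful about is keeping the two versions of the statement (adequate versus exactly adequate) aligned, which is handled uniformly by carrying $\leq$ or $=$ through the same computation.
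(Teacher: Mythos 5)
Your proof is correct, and it is exactly the relabeling argument the paper has in mind: the paper gives no explicit proof, stating only that the lemma is a direct consequence of the definition of adequacy, which is precisely the permutation-of-columns observation you spell out.
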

We will characterize adequacy more directly via the demand-duration vector. 
For this we employ some notions from majorization theory.

\begin{definition}[Majorization]{\rm \label{def:majorization}
Let $\VEC a = (a_1,\cdots,a_T)$ and $\VEC b = (b_1,\cdots,b_T)$ be two non-negative  vectors. Denote by $\VEC a^{\downarrow}, \VEC b^{\downarrow}$ the non-increasing rearrangements of $\VEC a$ and $\VEC b$ respectively. 
We say that $\VEC a$ majorizes $\VEC b$, written $\VEC a \lessthan \VEC b$, if 
\bromalist
\item $\sum_{s=t}^T a^{\downarrow}_s \leq \sum_{s=t}^T b^{\downarrow}_s$, for $t=1,2,\ldots,T$, and 
\item $\sum_{s=1}^T a^{\downarrow}_s= \sum_{s=1}^T b^{\downarrow}_s$.
\end{list}
If only the first condition holds, we say that $\VEC a$ weakly majorizes $\VEC b$, written 
$\VEC a \lessthan^{w} \VEC b$.}
\end{definition}

\begin{remark}{\rm The inequalities in our definition of majorization are reversed from standard use in majorization theory. 
This departure from convention allows us to write our adequacy conditions
as $\VEC d \lessthan \VEC p$ and $\VEC d \lessthan^w \VEC p$  which suggests the intuitive adequacy condition of demand being ``less than'' supply .}
\end{remark}
Our next result characterizes adequacy.

\begin{theorem}[\sc Adequacy]\label{thm:two}
\balphlist
\item The supply profile $\VEC p$  is exactly adequate for a demand profile $\VEC h$ with the associated demand-duration vector $\VEC d$ if and only if $\VEC d \lessthan \VEC p$.
\item The supply profile $\VEC p$  is adequate for a demand profile $\VEC h$ with the associated demand-duration vector $\VEC d$ if and only if 
 $\VEC d \lessthan^{w} \VEC p$.
 \end{list}
\end{theorem}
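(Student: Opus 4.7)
The plan is to view an allocation rule $A$ as an $N \times T$ matrix of $0$'s and $1$'s whose row sums are prescribed by $\VEC h$ and whose column sums must equal $\VEC p$ (exact adequacy) or be componentwise dominated by $\VEC p$ (adequacy). Since $d_t = |\{i : h_i \geq t\}|$ is by construction the conjugate partition of $\VEC h^{\downarrow}$, the realizability of such a matrix is governed by Gale--Ryser, whose inequalities---translated through the paper's reversed majorization convention---are exactly the suffix-sum comparisons $\sum_{s=t}^T d_s \leq \sum_{s=t}^T p^{\downarrow}_s$ appearing in the theorem.

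For Part (a), I prove necessity by a direct counting argument. Relabel slots so $\VEC p$ is non-increasing (harmless by Lemma 1) and fix $k$. Then
\[
\sum_{t=1}^k p^{\downarrow}_t \;=\; \sum_{i=1}^N \sum_{t=1}^k A(i,t) \;\leq\; \sum_{i=1}^N \min(k, h_i) \;=\; \sum_{s=1}^k d_s,
\]
because a single $0/1$ row can contribute at most $\min(k, h_i)$ ones to any $k$ columns. Combined with the total equality $\sum_t p_t = \sum_i h_i = \sum_s d_s$ (immediate from summing the row sums of $A$), this rearranges to $\sum_{s=t}^T d_s \leq \sum_{s=t}^T p^{\downarrow}_s$ for every $t$, i.e.\ $\VEC d \prec \VEC p$. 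For sufficiency I appeal to Gale--Ryser; equivalently, I can produce the allocation by a Least-Laxity-First schedule, invariantly maintaining, after each slot is processed, the majorization hypothesis on the residual demand against the remaining slots.

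Part (b) reduces to Part (a) in both directions. For necessity, given an allocation $A$ witnessing adequacy, define $\tilde p_t := \sum_i A(i,t)$; then $\tilde{\VEC p}$ is exactly adequate so $\VEC d \prec \tilde{\VEC p}$ by (a), and the componentwise inequality $\tilde p_t \leq p_t$ is preserved under non-increasing rearrangement and under suffix sums, giving
\[
\sum_{s=t}^T d_s \;\leq\; \sum_{s=t}^T \tilde p^{\downarrow}_s \;\leq\; \sum_{s=t}^T p^{\downarrow}_s,
\]
which is $\VEC d \prec^w \VEC p$. For sufficiency I trim $\VEC p$ to a profile $\tilde{\VEC p}$ with $\tilde p \leq p$ componentwise, $\sum_t \tilde p_t = \sum_s d_s$, and $\VEC d \prec \tilde{\VEC p}$, and then invoke (a) on $\tilde{\VEC p}$; any allocation for $\tilde{\VEC p}$ a fortiori serves the loads under $\VEC p$. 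The trimming is a water-filling reduction: repeatedly decrement the largest component of the current $\tilde{\VEC p}$ that can be decremented without breaking the non-increasing order, one unit at a time, until the slack $\sum p_t - \sum d_s \geq 0$ (guaranteed by weak majorization) is exhausted.

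The main obstacle is the sufficiency half of Part (a)---equivalently, producing the Gale--Ryser matrix. The cleanest self-contained route is induction on $T$: sort $\VEC p$ non-increasing, assign slot $T$ to the $p_T$ loads with largest $h_i$, and verify that the residual demand-duration vector over the first $T-1$ slots still satisfies the majorization hypothesis against $(p_1,\ldots,p_{T-1})$. The nontrivial bookkeeping is tracking how $\VEC d$ updates when $p_T$ of the largest demands are each decremented by one; this uses the $t=T$ end of the suffix-sum inequalities and the non-increasing structure of $\VEC d$ in an essential way.
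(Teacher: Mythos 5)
Your proposal is correct in substance but reaches the theorem by a genuinely different route in the two places where the real work lies. The necessity arguments coincide with the paper's (the counting bound $\sum_{t\in\mathcal S}A(i,t)\le\min(|\mathcal S|,h_i)$ summed over $i$, plus total equality), and your reduction for part (b) necessity via the induced profile $\tilde p_t=\sum_i A(i,t)$ and monotonicity of smallest-$k$ sums under componentwise domination is actually cleaner than the paper's second counting pass. The divergence is in sufficiency. For part (a) the paper does not cite Gale--Ryser: it observes that $\VEC d$ itself is exactly adequate via $A(i,t)=\mathds{1}_{\{t\le h_i\}}$, proves (Lemmas~2--4 and Claim~1 of the Appendix) that any $\VEC p$ with $\VEC d\prec\VEC p$ is reachable from $\VEC d$ by finitely many one-unit Robin Hood transfers, and shows each transfer preserves exact adequacy by rerouting a single load from a fuller slot to an emptier one. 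Your identification of $\VEC d$ as the conjugate partition of $\VEC h^{\downarrow}$ and of the statement as Gale--Ryser is accurate and more economical, but it buys brevity only at the price of an external citation; if self-containedness is wanted you must actually carry out the inductive construction you sketch (the residual-majorization check and the tie-breaking among equal $h_i$ are exactly where the content sits), and note that you cannot lean on the paper's Theorem~3 for the LLF variant, since that proof presupposes the existence of a feasible allocation and would be circular here. For part (b) sufficiency the two reductions are mirror images: the paper pads the demand with $\Delta=\sum_t p_t-\sum_t d_t$ fictitious one-slot loads so that $\VEC d^{\Delta}\prec\VEC p$ is immediate, whereas you trim the supply. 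Your trimming does work but needs a short justification you omit: if after sorting the suffix inequality at some $2\le t\le j$ (with $j$ the end of the top block) were tight, then validity at $t+1$ would force $d_t\ge p^{\downarrow}_t=\max_s p^{\downarrow}_s$, hence $d_1,\dots,d_{t-1}$ all equal that maximum, which forces equality of the totals and contradicts the remaining slack; so every suffix sum affected by the unit decrement has room for it. The paper's padding trick sidesteps this verification entirely.
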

\proof{Proof}
 See Appendix \ref{sec:two}.
\endproof
\subsection{Least Laxity Allocation}

We now describe an allocation rule that will play a key role in this paper.
Given an allocation rule $A$, we define the laxity of load $i$ at time $t$ as $x_i(t) := T-t+1 - \left(h_i -\sum_{s=1}^{t-1}A(i,s)\right)$. 

\begin{definition}[LLF Allocation]{\rm  \label{defn:LLF}
Fix the supply profile $\VEC p = (p_1,p_2,\ldots,p_T)$. The {\em Least Laxity Allocation} rule $A(i,t)$ is defined by
 \bromalist
  \item At time $1$, $x_i(1) = T-h_i$. Arrange the loads in non-decreasing order of $x_i(1)$ and let $\mathcal{A}_1$ be the collection of the first $p_1$ loads from this order. Set $A(i,1) =1$ if and only if $i \in \mathcal{A}_1$.
  \item At time $t$,  $x_i(t) = T -t+1-\left(h_i -\sum_{s=1}^{t-1}A(i,s)\right)$. Arrange loads in non-decreasing order of $x_i(t)$ and let $\mathcal{A}_t$ be the collection of the first $p_t$ loads from this order. Set $A(i,t) =1$ if and only if $i \in \mathcal{A}_t$.
 \end{list}}
\end{definition}

As its name suggests, at each time $t$ LLF  gives  priority to loads with smaller laxity.
Our next result shows that the LLF allocation successfully services the loads when the supply profile is adequate.  
We have:

\begin{theorem}\label{thm:three}
 If the supply profile $\VEC p = (p_1,p_2,\ldots,p_T)$ is adequate, 
 then the Least Laxity First allocation rule satisfies all the demands, i.e. 
 \[ \sum_t A(i,t) = h_i, \quad \sum_i A(i,t) \leq p_t \]
\end{theorem}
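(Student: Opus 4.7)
The plan is to proceed by induction on the number of slots $T$. The base case $T=1$ is immediate, because the weak majorization $\VEC d \lessthan^w \VEC p$ reduces to $d_1 \le p_1$, and LLF simply serves all loads at the unique slot. For the inductive step I would show that after LLF acts at time $1$, the residual instance -- with demand profile $\VEC h'$ defined by $h'_i = h_i - A(i,1)$ and supply $(p_2,\ldots,p_T)$ -- is itself adequate. The laxity $x_i(t)$ driving LLF at time $t\ge 2$ coincides exactly with the laxity used by LLF applied to the residual instance at its own first slot, so running LLF from $t=2$ onward on the original problem is identical to running LLF on the residual. Once residual adequacy is established, the induction hypothesis delivers the remaining allocation.

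To prove residual adequacy I would use a swap argument against an arbitrary baseline allocation. By Theorem~1(b) there exists some allocation $A^*$ for the original instance. Let $S^* = \{i : A^*(i,1)=1\}$ and let $S_{\mathrm{LLF}}$ denote the set of $\min(p_1,N)$ loads of largest $h_i$ (smallest laxity) that LLF selects at time $1$. If $|S^*| < |S_{\mathrm{LLF}}|$, I would first enlarge $S^*$: pick any $j \in S_{\mathrm{LLF}} \setminus S^*$, set $A^*(j,1)=1$, and revoke service for $j$ at some slot $t > 1$ where $A^*(j,t)=1$; this preserves both supply (since $|S^*| < p_1$) and the per-load total $h_j$. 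Once $|S^*| = |S_{\mathrm{LLF}}|$, whenever $S^* \ne S_{\mathrm{LLF}}$ there exist $i \in S^* \setminus S_{\mathrm{LLF}}$ and $j \in S_{\mathrm{LLF}} \setminus S^*$; by LLF's greedy rule, $h_j \ge h_i$. Load $i$ is served at exactly $h_i-1$ slots of $\{2,\ldots,T\}$, while $j$ is served at exactly $h_j \ge h_i > h_i-1$ slots there, so some $t^* > 1$ satisfies $A^*(j,t^*)=1$ and $A^*(i,t^*)=0$. Simultaneously swapping the assignments of $i$ and $j$ at slots $1$ and $t^*$ preserves all supply constraints and per-load totals while strictly enlarging $|S^* \cap S_{\mathrm{LLF}}|$. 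Iterating yields an $A^*$ with $S^* = S_{\mathrm{LLF}}$, and its restriction to $\{2,\ldots,T\}$ is then a feasible allocation for the residual instance, certifying residual adequacy via Theorem~1(b).

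The main obstacle is the careful bookkeeping of the swap phase. In particular, the existence of $t^*$ relies on the strict inequality $h_j \ge h_i > h_i-1$, which is itself a direct consequence of LLF's greedy selection; and the case $|S^*| \ne |S_{\mathrm{LLF}}|$ must be disposed of up front by the preliminary enlargement step above. Once these details are in hand, the induction proceeds mechanically.
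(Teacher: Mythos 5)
Your proposal is correct and is essentially the paper's own argument: both establish the claim by an exchange argument that takes a feasibility-certifying allocation, repeatedly swaps a load LLF serves at slot $1$ with one it does not (using $h_j \ge h_i$ to find a later slot $t^*$ where the swap partner is available), and thereby shows LLF's first-slot choice extends to a full feasible allocation, after which the argument recurses on the remaining slots with $h_i$ replaced by the residual demand (your induction on $T$ is just an explicit packaging of the paper's ``repeat at each $t>1$'' step). No gaps worth noting.
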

\proof{Proof}
See Appendix \ref{sec:app_3}.
\endproof


\subsection{Supplemental Power Purchases}
It may happen that the supply profile is not adequate for a given demand profile. In this case, the supplier will have to purchase additional power to serve the loads.  We determine the least costly increment in supply profile to make it adequate.
We consider two scenarios: (a)  {\em Oracle information:} the entire supply profile $p$ is revealed in advance, (b)  
{\em Run-time information:} the power available in slot $[t,t+1)$ is revealed at time $t$, i.e. immediately before the beginning of the slot. 

\textbf{Case (a):} \ The supplier needs to serve a demand profile $\VEC h$ with the associated demand-duration vector $\VEC d =(d_1,d_2,\ldots,d_T)$. Before the time of delivery, the supplier learns the true realization of the entire supply profile $\VEC p=(p_1,\ldots,p_T)$. If $\VEC d \lessthan^w \VEC p$,  the supply is adequate and there is no need for supplemental power. In the case of inadequate supply, the additional power to be purchased at minimum cost while ensuring that all demands are met is given by the solution of the following optimization problem:
\begin{equation}
    \min_{\VEC a \geq 0} \sum_{t=1}^T c \cdot a_t \notag \quad
    \mbox{subject to~~~} \VEC d \lessthan^w (\VEC p + \VEC a), \label{LP}
\end{equation}
where $c \geq 0$ is the unit price of supplemental power.  This is a linear programming problem since the majorization inequalities are linear.

\textbf{Case (b):}\  The power available in each slot is revealed just before the beginning of that slot. The supplier now faces  a sequential decision-making problem where the information available to make the purchase decision $a_t$ at time $t$ is $\VEC d, p_1,p_2,\ldots,p_t$. 
The supplier's objective  is to minimize the total cost of  additional power $\sum_{t=1}^Tc\cdot a_t$ while ensuring that all load demands are met.

Clearly the supplier's optimal cost in Case (b) is lower bounded by its optimal cost in Case (a). Surprisingly, it happens that the
optimal costs and corresponding decision strategies are identical in both situations. More precisely, we have:

\begin{figure*}[ht]
\begin{theorem}\label{thm:procurementb}
Consider the following decision strategy for the supplier: \\
(i) The additional power purchased at  $t=1$ is $a_1 = (d_T-p_1)^+$. The total power $(p_1+a_1)$ is allocated to consumers according to the LLF policy described in Definition \ref{defn:LLF}.\\
(ii) At time $t$, knowing the supply $p_1,p_2,\ldots,p_t$ and the purchases $a_1,\ldots,a_{t-1}$, the power purchased $a_t$ is the solution of the following optimization problem:
\begin{align} \nonumber
 &\min_{a_t} a_t  \\
 s.t.~~&(p_1+a_1,p_2+a_2,\ldots,p_{t-1}+a_{t-1},p_t+a_t) \morethan_w (d_{T-t+1},d_{T-t+2},\ldots,d_T)
\end{align}
The total power $(p_t+a_t)$ is allocated to consumers according to the LLF policy.  Then,
\balphlist
\item 
This strategy is optimal under both the oracle information and run-time information cases. 
\item The optimal cost is $c\left( \max_{t} \left(\sum_{ s \geq t} (d_s -p^{\downarrow}_s)\right)^+\right)$.
\end{list}
\end{theorem}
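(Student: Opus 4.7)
The plan is to prove both parts together by showing that the proposed run-time causal policy achieves the oracle-information minimum cost $cM$, with $M := \max_t \bigl(\sum_{s\ge t}(d_s - p_s^{\downarrow})\bigr)^+$. Because the run-time information structure is a strict restriction of the oracle one, its optimum is bounded below by the oracle optimum, so exhibiting a run-time policy that matches the oracle proves optimality in both settings simultaneously.

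For the oracle LP I would establish matching lower and upper bounds on $\sum_s a_s$. For the lower bound, I unpack $\VEC d \lessthan^{w} (\VEC p+\VEC a)$ as $\sum_{s\ge t}d_s \le \sum_{s\ge t}(p+a)_s^{\downarrow}$ for every $t$, and combine it with the elementary inequality $\sum_{s\ge t}(p+a)_s^{\downarrow} \le \sum_{s\ge t}p_s^{\downarrow} + \sum_s a_s$, obtained by writing each side as total mass minus the top $(t-1)$-sum and observing that the top $(t-1)$-sum is non-decreasing under adding $\VEC a\ge 0$. This yields $\sum_s a_s \ge \sum_{s\ge t}(d_s - p_s^{\downarrow})$ for every $t$, hence $\sum_s a_s \ge M$. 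For the matching upper bound I reformulate the LP in the sorted domain as minimizing $\sum_s(q_s - p_s^{\downarrow})$ over non-increasing $\VEC q$ with $q_s \ge p_s^{\downarrow}$ and $\VEC q \morethan_w \VEC d$, and construct the pointwise-minimal feasible $\VEC q$ by a backward sweep that tightens each tail constraint just enough; a direct accounting of the increments yields total added mass exactly $M$.

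For the causal policy I would first check inductively that each per-step minimization has a solution (taking $a_t$ large satisfies every smallest-$k$-sum constraint) and that its final-step constraint is exactly the adequacy condition $(\VEC p+\VEC a)\morethan_w \VEC d$, so Theorem \ref{thm:two}(b) together with Theorem \ref{thm:three} guarantees that the LLF allocation serves every load. The principal task, and the main obstacle, is to show that the causal total equals $M$. The key is a structural lemma: the per-step condition $V_t \morethan_w (d_{T-t+1},\ldots,d_T)$, with $V_t := (p_1+a_1,\ldots,p_t+a_t)$, is both necessary and sufficient for the existence of a non-negative completion $a_{t+1},\ldots,a_T$ realizing $(\VEC p+\VEC a)\morethan_w \VEC d$. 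Necessity follows because the sum of the $k$ smallest entries of $V_t$ is at least the sum of the $k$ smallest entries of the full $T$-vector, which is in turn $\ge D_k := \sum_{s=T-k+1}^T d_s$ by the final adequacy condition; sufficiency I would verify by padding the remaining slots with values $\ge \max(\max V_t, d_1)$ and checking each tail inequality using $\sum V_t \ge D_t$ together with the non-increasing structure of $\VEC d$. With the lemma in hand, the causal strategy minimizes $a_t$ over exactly the set of values admitting an oracle-optimal extension, so by induction it coincides with the lexicographic oracle optimum coordinate by coordinate and its total cost is $M$, establishing (a) and (b).
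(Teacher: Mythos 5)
Your overall architecture --- lower-bound the oracle LP by $M:=\max_t\bigl(\sum_{s\ge t}(d_s-p^{\downarrow}_s)\bigr)^+$, then exhibit a causal policy achieving $cM$ --- is sound and differs from the paper, which instead proves optimality by an exchange argument (modifying an arbitrary oracle-optimal purchase vector $\VEC b$ one unit at a time, using the LLF allocation to relocate a deferred unit of service) and computes the cost via the potential function $F(\VEC x)=\max_{\mathcal S}\bigl(\sum_{s=T-|\mathcal S|+1}^T d_s-\sum_{r\in\mathcal S}x_r\bigr)^+$, showing $F$ drops by exactly $a_t$ at each step. Your lower bound $\sum_s a_s\ge M$ is correct (the inequality $\sum_{s\ge t}(\VEC p+\VEC a)^{\downarrow}_s\le\sum_{s\ge t}p^{\downarrow}_s+\sum_s a_s$ holds for the reason you give), and your completion lemma --- that $V_t\morethan_w(d_{T-t+1},\ldots,d_T)$ is necessary and sufficient for the existence of \emph{some} non-negative completion --- is also correct, with the padding argument you sketch.

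The genuine gap is the last step. Your lemma characterizes prefixes admitting a \emph{feasible} completion, but you then assert that greedily minimizing $a_t$ over that set ``coincides with the lexicographic oracle optimum'' and hence has total cost $M$. Lexicographic minimization does not in general minimize a sum, and nothing you have proved shows that the greedy choice of $a_t$ still admits a completion of total cost $M$ (as opposed to merely a feasible one); that invariant is precisely the content of the theorem and is where all the work lies. To close it you need either the paper's swapping argument showing some minimum-cost oracle solution agrees with the greedy coordinate by coordinate, or its potential-function computation showing each greedy purchase reduces the maximal deficit by exactly its own amount (which requires the nesting argument relating the constraint binding at step $t$ to the globally binding one). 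A secondary, fixable issue: the feasible set of non-increasing $\VEC q\ge\VEC p^{\downarrow}$ with $\VEC q\morethan_w\VEC d$ is not closed under coordinatewise minima (e.g.\ $d=(3,1)$, $\VEC p^{\downarrow}=(0,0)$ admits the incomparable minimal points $(3,1)$ and $(2,2)$), so ``the pointwise-minimal feasible $\VEC q$'' does not exist; the backward greedy still produces a feasible point of total increment $M$, but that accounting must actually be carried out rather than invoked.
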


\end{figure*}
\proof{Proof}
See Appendix \ref{sec:procurementb}.
\endproof



\section{The supplier's optimization problem}\label{sec:supplier_problem}
We can now address Problem 1 described in Section~\ref{sec:PF}.  Suppose the supplier purchases a power profile $\VEC y = (y_1,y_2,\ldots,y_T)$ in the day-ahead market. In real-time, the renewable supply takes a realization $\VEC r$, so that the total realized supply profile is $\VEC p = \VEC r+ \VEC y$. Since, this supply may not be adequate, the supplier would have to purchase real-time energy. The optimal policy for making these real-time purchases is given by  the decision strategy described in Theorem~\ref{thm:procurementb} with the associated cost given as $c\left( \max_{t} \left(\sum_{ s \geq t} (d_s -p^{\downarrow}_s)\right)^+\right)$. Hence, the supplier's total expected cost is given as
\begin{align} \label{eq:expected_cost2}
J(\VEC y) = c^{da}\sum_{t=1}^T y_t + c^{rt}\mathds{E}\Bigg[  \max_{t} \left(\sum_{ s \geq t} (d_s -(\VEC R + \VEC y)^{\downarrow}_s)\right)^+\Bigg],
\end{align}
where the expectation is over the renewable supply $\VEC R$. The supplier's optimization problem now is to choose $\VEC y$ to minimize $J(\VEC y)$. Recall that $\VEC y$ is discrete-valued. However, by relaxing the integer constraint on $\VEC y$, we get a convex optimization problem which can be used to provide approximate solutions to supplier's optimization problem.
\begin{theorem}\label{thm:convexity}
If we relax the constraint that $\VEC y$ can take only integer values, then the supplier's objective function is convex in $\VEC y$.
\end{theorem}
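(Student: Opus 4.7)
The plan is to peel back the operator stack in $J(\mathbf{y})$ and show convexity is preserved at each layer. The first term $c^{da}\sum_t y_t$ is linear, hence convex. Since expectation preserves convexity, it suffices to prove that for every deterministic realization $\mathbf{r}$ of $\mathbf{R}$, the inner map
\[
\phi_{\mathbf{r}}(\mathbf{y}) \;:=\; \max_{1 \leq t \leq T}\Bigl(\textstyle\sum_{s \geq t}(d_s - (\mathbf{r}+\mathbf{y})^{\downarrow}_s)\Bigr)^{+}
\]
is convex in $\mathbf{y}$. Pointwise maximum over finitely many convex functions is convex, and $(\cdot)^{+}=\max(\cdot,0)$ is convex and nondecreasing, so composition will be fine once I show that each index-$t$ summand
\[
\psi_t(\mathbf{y}) \;:=\; \textstyle\sum_{s=t}^T d_s \;-\; \textstyle\sum_{s=t}^T (\mathbf{r}+\mathbf{y})^{\downarrow}_s
\]
is convex in $\mathbf{y}$. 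The first piece is a constant, so everything reduces to showing that $\mathbf{y}\mapsto -\sum_{s=t}^T (\mathbf{r}+\mathbf{y})^{\downarrow}_s$ is convex.

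The key identity is $\sum_{s=1}^T (\mathbf{r}+\mathbf{y})^{\downarrow}_s = \sum_{s=1}^T (r_s+y_s)$, which lets me rewrite
\[
-\textstyle\sum_{s=t}^T (\mathbf{r}+\mathbf{y})^{\downarrow}_s \;=\; -\textstyle\sum_{s=1}^T (r_s+y_s) \;+\; \textstyle\sum_{s=1}^{t-1}(\mathbf{r}+\mathbf{y})^{\downarrow}_s.
\]
The first term is affine in $\mathbf{y}$. For the second, I invoke the standard variational representation of the sum of the top-$k$ order statistics:
\[
\textstyle\sum_{s=1}^{t-1} x^{\downarrow}_s \;=\; \max_{\substack{S \subseteq \{1,\ldots,T\} \\ |S|=t-1}} \;\textstyle\sum_{i \in S} x_i,
\]
which expresses it as the pointwise maximum of $\binom{T}{t-1}$ linear functions of $\mathbf{x}$, hence convex in $\mathbf{x}$. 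Composing with the affine map $\mathbf{y}\mapsto \mathbf{r}+\mathbf{y}$ preserves convexity, so $\psi_t$ is convex in $\mathbf{y}$.

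Reassembling the layers: each $\psi_t$ is convex, so each $\psi_t^+$ is convex, the maximum $\phi_{\mathbf{r}}=\max_t \psi_t^+$ is convex, the expectation $\mathds{E}[\phi_{\mathbf{R}}(\mathbf{y})]$ is convex, and adding the linear term $c^{da}\sum_t y_t$ yields $J(\mathbf{y})$ convex. The only nonroutine step is the top-$k$ representation of the order statistics; everything else is bookkeeping with standard composition rules. I expect no real obstacle, since the variational formula for partial sums of order statistics is classical (it is essentially the Hardy-Littlewood-P\'olya characterization via doubly stochastic matrices), and the rest is mechanical.
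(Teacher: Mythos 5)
Your proof is correct and follows essentially the same route as the paper: both reduce the problem to expressing a partial sum of the order statistics of $\mathbf{r}+\mathbf{y}$ as an extremum of affine functions over subsets of $\{1,\ldots,T\}$, yielding a pointwise maximum of affine functions of $\mathbf{y}$. The only cosmetic difference is that the paper writes the bottom-$(T-t+1)$ sum directly as a minimum over subsets of that cardinality, whereas you pass through the complement (total sum minus the top-$(t-1)$ maximum); you are in fact slightly more careful in explicitly handling the $(\cdot)^{+}$ via composition with a nondecreasing convex function.
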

\begin{proof}
See Appendix \ref{sec:convexity}.
\end{proof}

\section{Forward market for duration-differentiated contracts}\label{sec:market}
In the analysis presented in the previous sections, the set of duration-differentiated  loads was assumed to be fixed. In this section the problem of arriving to that fixed set is investigated. We consider the case in which the fixed set of duration-differentiated loads is the outcome of a market interaction between consumers and suppliers. 

In particular, we consider a forward market for duration-differentiated services and investigate its properties. Duration-differentiated services couple supply and consumption across different time slots. A natural way to capture this is to consider a forward market for the whole operational period where services of different durations are bought and sold. In this way, both consumers and suppliers can effectively quantify the value/cost of consuming/producing these services.  All the market transactions are completed before the delivery time. Thus, the market decisions are made prior to the operational decisions required for the delivery of the products which is a characteristic of direct load control. In order to illustrate the advantages of forward markets, we perform a comparison with a stylized real-time spot market implementation. The results shed light about potential inefficiencies resulting from the implementation of spot markets in which the inter-temporal dimension of duration-differentiated contracts is hard to capture.

%

In order to make the analysis tractable, we focus on idealized markets structures and consumers and suppliers behaviors. In particular, we assume the following:
\begin{enumerate}
\item Competitive Market: Suppliers and consumers are assumed to be price takers without the possibility of impact the market prices.
\item Certainty Equivalence: In order to make their market decision, the suppliers ignore the uncertainty in their forecast by treating their expected value of renewable power as the true realization. 
\end{enumerate}
The first assumption is standard in market analysis in which price are assumed to be exogenous to the players decisions. The second assumption allows to simplify the suppliers problem in the market setting. It avoids the need to consider the costs resulting from the two-stage optimization problem presenting previously. By assuming this, the additional power cost will be just $c^{da} \sum_{t=1}^T y_t$.

The information flow of the market is depicted in Figure \ref{fig-flow}. Facing a menu of services with associated prices $\mathcal{M}=\{k,\pi(k)\}$, consumer $n$ selects a service $h_n$ that maximizes her net benefit, while the supplier selects the number $n_t$ of services of duration $t$  to produce that  maximize her net profit.  

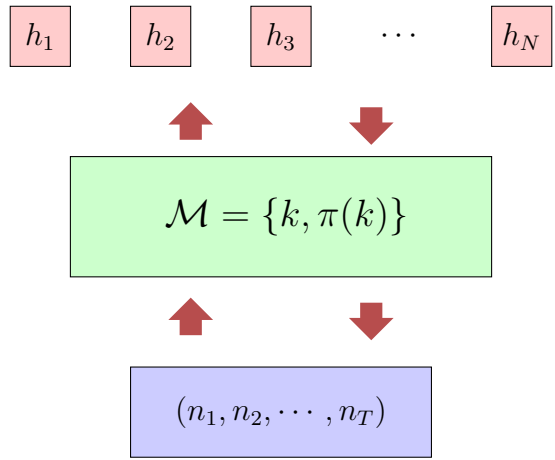
\begin{figure*}[ht]
\centering
\usetikzlibrary{shapes.arrows,fadings}
\begin{tikzpicture}[scale=0.4]
\draw [fill=blue, fill opacity = 0.2] (4,-1) rectangle (14,2);
\draw [fill=green, fill opacity = 0.2] (2,5) rectangle (16,9);
\draw [fill=red, fill opacity = 0.2] (0,12) rectangle (2,14);
\draw [fill=red, fill opacity = 0.2] (4,12) rectangle (6,14);
\draw [fill=red, fill opacity = 0.2] (8,12) rectangle (10,14);
\draw [fill=red, fill opacity = 0.2] (16,12) rectangle (18,14);
\node[] at (1,13) {$h_1$};
\node[] at (5,13) {$h_2$};
\node[] at (9,13) {$h_3$};
\node[] at (13,13) {$\cdots$};
\node[] at (17,13) {$h_N$};
\node[] at (9,7) {\fontsize{14}{16} ${\cal M} = \{k,\pi(k)\}$};
\node[] at (9,0.5) {$(n_1, n_2, \cdots, n_T)$};
\node [
    fill=red!60!black,
    fill opacity=0.7,
    single arrow,
    minimum height=0.5cm,
    single arrow head extend=0.15cm,
    rotate=90
] at (6,3.5){};
\node [
    fill=red!60!black,
    fill opacity=0.7,
    single arrow,
    minimum height=0.5cm,
    single arrow head extend=0.15cm,
    rotate=270
] at (12,3.7){};
\node [
    fill=red!60!black,
    fill opacity=0.7,
    single arrow,
    minimum height=0.5cm,
    single arrow head extend=0.15cm,
    rotate=90
] at (6,10){};
\node [
    fill=red!60!black,
    fill opacity=0.7,
    single arrow,
    minimum height=0.5cm,
    single arrow head extend=0.15cm,
    rotate=270
] at (12,10.2){};
\end{tikzpicture}
\caption{Market implementation.}
\label{fig-flow}
\end{figure*}
We first characterize the decisions that maximize total social welfare, defined as aggregate consumer utility minus the cost of  purchased energy.  We then show that the optimum decisions can be sustained as a  competitive equilibrium. Lastly,   we compare the competitive equilibrium in a real-time spot market with
the equilibrium for our duration-differentiated forward market. 

\subsection{Social Welfare Problem}
We consider a set of homogenous consumers, consumer $i=1,\ldots,N$ enjoys utility $U(h)$ upon consuming 1 kW of power for  $h$ slots. The supplier has available for free a quantity of power with profile $\VEC r = (r_1,\ldots,r_T)$, and can also purchase additional energy at $c^{da}$ per kW-slot. 
 The  social welfare  optimization problem is
\begin{align}
    &\max_{\VEC h \geq 0,\VEC y \geq 0} \sum_{i=1}^N U(h_i) - \sum_{t=1}^T c^{da} y_t \notag \\
    & \mbox{subject to~} 
     d_t = \sum_{i=1}^N \mathds{1}_{\{t\leq h_i\}} \notag \\
    & \hspace{55pt}\VEC y + \VEC r   \morethan^w \VEC d \notag 
\end{align}



\begin{figure*}[ht]

\begin{theorem}[\sc Social Welfare] \label{thm:soc_welfare} 
Assume the profile $\VEC r$ is arranged in decreasing order: $\VEC r = {\VEC r}^{\downarrow}$.
\balphlist
\item (Convex utility)  Suppose $U(h) - U(h-1)$ is a non-negative, non-decreasing function of $h$ (with $U(0)=0$) and the number of consumers $N$ is larger than $r_1$.  Define
\begin{equation}\label{eq:convex_kstar}
k^* = \left\{ \begin{array}{ll}
	\min k : k \in \{0,1,\ldots,(T-1)\}, \ \ \frac{U(T)-U(k)}{T-k} \geq c^{da} & \text{if this exists} \\
	T & \text{otherwise} \end{array} \right.
\end{equation}
	
If  $k^* \geq 1$, the optimum demand duration  is
\begin{align}
d^*_t =\left \{\begin{array}{ll} r_t &\mbox{ if $t < k^*$}\\
                                      r_{k^*} &\mbox{ if $t \geq k^*$} \\                           
                                        \end{array}
                                       \right. \label{eq:soc_welfare1a}
\end{align}    
If $k^* =0$, the optimum demand duration vector is $d^*_t =N$ for all $t$.

\item (Concave utility)  Suppose $U(h)- U(h-1)$ is a non-negative, non-increasing function of $h$ (with $U(0)=0$) and the number of consumers $N$ is larger than $\sum_t r_t$. Define
\begin{equation}\label{eq:concave_kstar}
k^* = \left\{ \begin{array}{ll}
	\min k : k \in \{1,\ldots,T\}, \ \ {U(k)-U(k-1)} \geq c^{da}  & \text{if this exists} \\
	0 & \text{otherwise} \end{array} \right.
\end{equation}
%
If $k^* \geq 1$, the optimum demand duration   is  
\begin{align}
d^*_t =\left \{\begin{array}{ll} N &\mbox{ if $t \leq k^*$}\\
                                      0 & \mbox{ if $t > k^*$}\\
                                        \end{array}
                                       \right. \label{eq:soc_welfare1b}
\end{align} 
If $k^*=0$, the social welfare maximizing demand duration vector is  
\begin{align}
d^*_t =\left \{\begin{array}{ll} 
                                      \sum_{i=1}^T r_i &\mbox { if $t=1$} \\
                                      0  &\mbox{ if $t > 1$}\\
                                        \end{array}
                                       \right. \label{eq:soc_welfare1c}
\end{align} 
\end{list}                 
\end{theorem}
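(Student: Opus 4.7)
The plan is to use Theorem~\ref{thm:procurementb} to reduce the problem to an optimisation over the demand-duration vector $\VEC d$, and then establish optimality of the claimed $\VEC d^{*}$ via upper bounds and exchange arguments. By Abel summation $\sum_i U(h_i) = \sum_{t=1}^{T}\Delta U(t)\,d_t$ with $\Delta U(t):=U(t)-U(t-1)$; under certainty equivalence and because $\VEC r = \VEC r^{\downarrow}$, Theorem~\ref{thm:procurementb}(a) yields the minimum supplementary cost as $c^{da} M(\VEC d)$ with $M(\VEC d):=\max_t\bigl(\sum_{s\ge t}(d_s-r_s)\bigr)^{+}$. Hence the welfare problem reduces to $\max_{\VEC d\in\mathcal{D}}\Phi(\VEC d)$, where $\Phi(\VEC d) := \sum_t \Delta U(t) d_t - c^{da} M(\VEC d)$ and $\mathcal{D}:= \{\VEC d \in \mathbb{Z}_+^T : N\ge d_1\ge \cdots \ge d_T\}$; this is a convex program.

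For the concave case, I would consider the step profile $\VEC d^{(k)}$ defined by $d^{(k)}_t = N$ for $t\le k$ and $0$ otherwise. The hypothesis $N>\sum_s r_s$ forces $N>r_t$ for every $t$, so $t\mapsto \sum_{s\ge t}(d^{(k)}_s - r_s)$ is strictly decreasing on $\{1,\ldots,k\}$ and attains its maximum at $t=1$; hence $M(\VEC d^{(k)}) = (kN-\sum_s r_s)^{+}$, and direct substitution gives $\Phi(\VEC d^{(k)}) - \Phi(\VEC d^{(k-1)}) = N(\Delta U(k)-c^{da})$ for $k\ge 2$, making $k=k^{*}$ the optimum within the step family. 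To rule out every other $\VEC d\in\mathcal{D}$, I would use the lower bound $M(\VEC d)\ge(\sum_s(d_s-r_s))^{+}$ (the $t=1$ entry of the max), which yields, for $\sum_s d_s \ge \sum_s r_s$,
\[
  \Phi(\VEC d) \le \sum_{i=1}^N\bigl(U(h_i) - c^{da}h_i\bigr) + c^{da}\!\sum_s r_s \le N\bigl(U(k^{*}) - c^{da}k^{*}\bigr) + c^{da}\!\sum_s r_s = \Phi(\VEC d^{(k^{*})}),
\]
the per-consumer maximum lying at $h=k^{*}$ by concavity. The complementary regime $\sum_s d_s<\sum_s r_s$ is bounded by $U(1)\sum_s d_s<U(1)\sum_s r_s\le\Phi(\VEC d^{(k^{*})})$, using $U(h)\le hU(1)$ (concavity with $U(0)=0$), $N>\sum_s r_s$, and $U(k^{*})-c^{da}k^{*}\ge U(1)-c^{da}$ (from $\Delta U(t)\ge c^{da}$ on $t\le k^{*}$). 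The corner $k^{*}=0$ follows from the same bounds.

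For the convex case, a direct calculation on the claimed $\VEC d^{*}$ of~\eqref{eq:soc_welfare1a} gives $M(\VEC d^{*}) = (T-k^{*})r_{k^{*}} - \sum_{s>k^{*}}r_s$, attained at every $t\le k^{*}$. Optimality then follows by ruling out single-consumer perturbations: adding a consumer of duration $h$ (feasible since $N>r_1$) shifts $\sum_{s\ge t}(d_s-r_s)$ by $(h-t+1)^{+}$, and a short calculation using that $A(1)=M$ is the active index gives $\Delta M = h$ and hence $\Delta \Phi = U(h) - c^{da}h$. Convexity with $U(0)=0$ makes $h\mapsto U(h)/h$ non-decreasing and bounded above by $U(T)/T$, which is strictly less than $c^{da}$ precisely when $k^{*}\ge 1$, so this perturbation is unprofitable. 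Any other feasible change (modifying a consumer's duration, or rearranging mass subject to monotonicity) reduces, via convexity of $U$, to the scalar marginal test $\frac{U(T)-U(k)}{T-k}-c^{da}$, which is non-positive exactly when $k\ge k^{*}$ by definition; the corners $k^{*}=0$ and $k^{*}=T$ come out of the same marginal test. The principal technical difficulty throughout is the non-smoothness of $M$ (a maximum of $T$ linear functions, so its subgradient depends on the active index $t^{\star}$); the size hypotheses $N>\sum_s r_s$ (concave) and $N>r_1$ (convex) are precisely what pin $t^{\star}$ at the candidate optimum, reducing the exchange analysis to the clean scalar comparisons above.
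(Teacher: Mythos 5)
Your reduction of the problem to maximizing $\Phi(\mathbf{d})=\sum_t \Delta U(t)\,d_t - c^{da}M(\mathbf{d})$ over monotone integer vectors, with $\Delta U(t):=U(t)-U(t-1)$ and $M$ taken from Theorem~\ref{thm:procurementb}, is valid and is a genuinely different route from the paper, which never forms $M$: the paper first shows (via the same Abel-summation identity plus weak majorization) that for any \emph{fixed} total supply $\mathbf{x}=\mathbf{r}+\mathbf{y}$ the consumer-optimal demand duration is $\mathbf{d}=\mathbf{x}^{\downarrow}$ in the convex case (and an even spread in the concave case), and then reduces the purchase decision to $N$ independent scalar problems $\max_{z\ge 0}\, U(h_i+z)-c^{da}z$, whose convexity/concavity forces the corner solutions $z_i\in\{0,T-h_i\}$, resp.\ $z_i=k^*-h_i$. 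Your concave-case argument is essentially complete: the single linear minorant $M(\mathbf{d})\ge \sum_s(d_s-r_s)$ yields a per-consumer separable upper bound that is tight at $\mathbf{d}^{(k^*)}$. (Note, though, that with non-increasing increments the ``$\min k$'' in \eqref{eq:concave_kstar} must be read as the \emph{largest} $k$ with $U(k)-U(k-1)\ge c^{da}$, which is what your ``per-consumer maximum at $h=k^*$'' actually uses.)

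The convex case has a genuine gap. Ruling out single-consumer perturbations of $\mathbf{d}^*$ is only a necessary condition for optimality, not a sufficient one. The obstruction is exactly the non-smoothness you flag: at $\mathbf{d}^*$ the max defining $M$ is attained at \emph{every} $t\le k^*$, so the increase of $M$ under a sum of perturbations can be strictly smaller than the sum of the individual increases --- the directional derivative of $-c^{da}M$ is superadditive --- and therefore showing that each of ``add one consumer of duration $h$'' and ``extend/shorten one consumer'' is individually unprofitable does not rule out a profitable combination. Concavity of $\Phi$ in $\mathbf{d}$ does not rescue this unless you verify a nonpositive directional derivative along \emph{all} feasible directions, or exhibit a global linear majorant of $\Phi$ tight at $\mathbf{d}^*$; here that requires a suitably weighted convex combination of the active minorants $\sum_{s\ge t}(d_s-r_s)$ for $t\le k^*$, not just the $t=1$ minorant as in your concave case, and you do not supply the weights. (There is also a sign slip: $\frac{U(T)-U(k)}{T-k}-c^{da}$ is non\emph{negative} for $k\ge k^*$ by definition.) The cleanest repair is the paper's two-step reduction described above, which eliminates $M$ entirely and leaves only separable scalar problems.
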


\end{figure*}

\proof{Proof}
See Appendix \ref{sec:soc_welfareproof}.
\endproof

In the convex case, the utility increments are non-decreasing in $h$ and the optimal allocation   favors longer duration contracts. In the concave case,  the utility increments are non-increasing in $h$ and the optimal  allocation   favors the shortest durations. 

\begin{remark}{\rm 
In standard commodity markets, the usual setting is to consider concave utility functions which reflects the decreasing marginal utility of many goods. In the case of duration-differentiated loads, the concave case could represent situations in which additional hours of consumption does not increase the marginal utility, for example the filtering of a pool beyond the minimum numbers of hours. However, in this case convex utility functions are also of interest. That could represent loads for which interruptions of the consumption is material. Examples include industrial mining processes, power supply for computational applications, air flow in hospitals.}
\end{remark}

\begin{example}{\rm 
For example, take  $\VEC r=(5,4,2,1,1,0)$, the number of consumers $N=14$ and the time period $T=6$. In addition, the utility function for the convex case is such that
\begin{equation}
{U(6)-U(5)} \geq c^{da} .
\end{equation}
and for the concave case
\begin{equation}
{U(1)-U(0)} \geq c^{da} .
\end{equation}
In the convex case the optimal allocation is is $\VEC h=(1,2,2,3,6)$ as in Figure \ref{fig-ex1} (left). Note that an additional unit of supply is utilized and used to create a contract of duration 6 hours. In the concave case only contracts of duration 1 slots are required as in Figure \ref{fig-ex1} (middle). Note that in this case, also an additional unit of supply is utilized to create an additional contract of duration 1.}
\end{example}



\begin{figure*}[ht\begin{figure*}]
\centering
\begin{tikzpicture}[scale=0.25]
\draw [xstep=2,ystep=2, thin, dotted] (0,0) grid  (12,10);
\node[anchor=east] at (0,1) {$h_1 = 6$};
\node[anchor=east] at (0,3) {$h_2 = 3$};
\node[anchor=east] at (0,5) {$h_3 = 2$};
\node[anchor=east] at (0,7) {$h_4 = 2$};
\node[anchor=east] at (0,9) {$h_5 = 1$};
\node[anchor=north] at (6,0) {$d =(5,4,2,1,1,1)$};
\draw [fill=blue, fill opacity = 0.2, thick] (0,0) rectangle (12,2);
\draw [fill=green, fill opacity = 0.2, thick] (0,2) rectangle (6,4);
\draw [fill=red, fill opacity = 0.2, thick] (0,4) rectangle (4,6);
\draw [fill=yellow, fill opacity = 0.2, thick] (0,6) rectangle (4,8);
\draw [fill=cyan, fill opacity = 0.2, thick] (0,8) rectangle (2,10);

\draw [fill=blue, fill opacity = 0.2, thick] (22,0) rectangle (24,2);
\draw [fill=green, fill opacity = 0.2, thick] (22,2) rectangle (24,4);
\draw [fill=red, fill opacity = 0.2, thick] (22,4) rectangle (24,6);
\node[] at (23,8) {$\vdots$};
\node[] at (23,11) {$\vdots$};
\node[] at (20,9) {$h_i = 1$};
\draw [fill=violet, fill opacity = 0.2, thick] (22,12) rectangle (24,14);
\draw [fill=yellow, fill opacity = 0.2, thick] (22,14) rectangle (24,16);
\draw [fill=cyan, fill opacity = 0.2, thick] (22,16) rectangle (24,18);
\node[anchor=north] at (23,0) {$d =(14,0,0,0,0,0)$};

\draw [xstep=2,ystep=2, thin, dotted] (34,0) grid (46,10);
\draw [fill=blue, fill opacity = 0.2, thick] (34,0) rectangle (46,2);
\draw [fill=green, fill opacity = 0.2, thick] (34,2) rectangle (40,4);
\draw [fill=red, fill opacity = 0.2, thick] (34,4) rectangle (36,6);
\draw [fill=yellow, fill opacity = 0.2, thick] (34,6) rectangle (36,8);
\draw [fill=red, fill opacity = 0.2, thick] (38,4) rectangle (40,6);
\draw [fill=yellow, fill opacity = 0.2, thick] (38,6) rectangle (40,8);
\draw [fill=cyan, fill opacity = 0.2, thick] (34,8) rectangle (36,10);
\node[anchor=north] at (40,0) {$q =(5,2,4,1,1,1)$};

\end{tikzpicture}
\caption{Optimal allocation for convex case (left), concave case (middle), number of interruptions in the convex case (right).}
\label{fig-ex1}
\end{figure*}
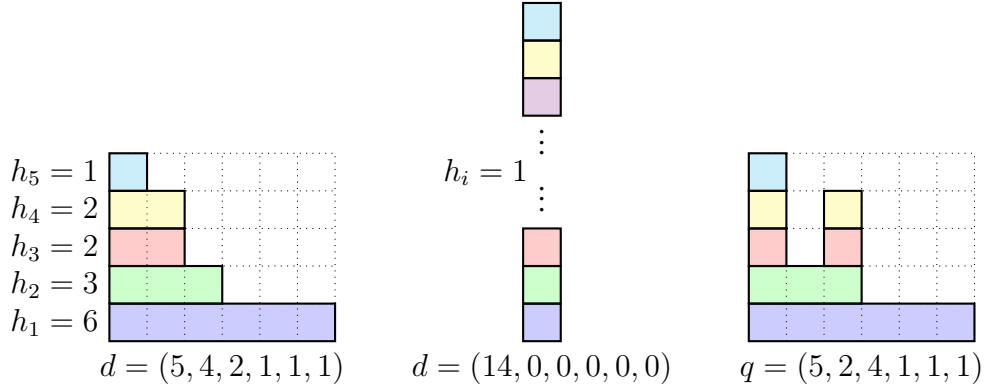

\subsection{Competitive equilibrium}
We now analyze a stylized forward market for the production and consumption of duration-differentiated services. We focus on a perfect-competition setting in which prices are assumed to be \textit{exogenous} to the players' decisions. Consequently, all players are \textit{price takers}, i.e., they cannot influence the prices. The perfect-competition setting is certainly an idealization but it provides valuable insights in terms of market design. In particular, these outcomes are usually used as a benchmark for analysis in which perfect-competition is not considered, e.g., monopolistic or oligopolistic settings. 

The market determines a price for each service, every consumer selects the service she wants to maximize her net benefit, and the supplier decides how much of each service to produce. If the demand and supply for each service match, a competitive equilibrium is said to exist. Mathematically, a competitive equilibrium is defined as follows. 

\begin{definition}[Competitive Equilibrium]{\rm
The supplier's  service production vector $\VEC n = (n_1,n_2,\ldots,n_T)$, the consumers'  demand  profile $\VEC h = (h_1,h_2,\ldots,h_N)$, and a set $\{\pi(h)\}$ of prices  constitute a {\em competitive equilibrium} if  three conditions hold:
\bromalist
\item Consumer Surplus Maximization: $h_i \in \argmax_h U(h) - \pi(h)$, all $i$.
\item Profit Maximization: The services produced maximize profit, i.e. $\VEC n$ solves this optimization problem:
\begin{align}
    &\max_{\VEC n \geq 0,~ \VEC y \geq 0}  \sum_{t=1}^{T}(n_t\pi(t) - Cy_t)  \notag \\
    & \mbox{subject to~} d_t = \sum_{i=t}^T n_i \notag \\
     & \hspace{20pt} \VEC r + \VEC y \morethan^w \VEC d \notag 
\end{align}
\item Market Clearing: The  supply and demand for each service are equal: $n_t = \sum_{i=1}^N \mathds{1}_{\{h_i=t\}}$.
\end{list}}
\end{definition}

\begin{definition}[Efficiency]{\rm
The competitive equilibrium is {\em efficient} if the resulting allocation maximizes social welfare.}
\end{definition}

We next characterize efficient competitive equilibria for convex and concave utility functions.

\begin{theorem}[\sc Efficient Competitive Equilibrium] \label{thm:market} 
Assume the profile $\VEC r$ is arranged in decreasing order: $\VEC r = {\VEC r}^{\downarrow}$.
\balphlist
\item (Convex utility)  Suppose  $U(h)-U(h-1)$ is non-decreasing  (with $U(0)=0$) and the number of consumers $N$ is larger than $r_1$. Then an efficient competitive equilibrium exists and  is described as follows: \\
Prices: $\pi(h)=U(h)$. \\
Supplier's production: Define $k^*$ as in \eqref{eq:convex_kstar}. Then,  if $k^* \geq 1$, $n_t=r_t-r_{t+1}$ for $t<k^*$, $n_t = 0,$ for $k^* \le t <T$, and $n_T=r_{k^*}$ 
If $k^* =0$, the supplier's production is $n_T=N$.\\
Consumption: $n_t$ consumers purchase duration $t$ service, so market is  cleared.\\
\item (Concave utility)  Suppose $U(h)- U(h-1)$ is non-increasing (with $U(0)=0$) and the number of consumers $N$ is larger than $\sum_t r_t$. Then an efficient competitive equilibrium exists and  is described as follows: \\
 Prices: $\pi(h)=\min(c^{da},U(1))h$.\\
Supplier's production:  Define $k^*$ as in \eqref{eq:concave_kstar}. If $k^* \geq 1$, $n_{k^*}=N$ 
If $k^*=0$,  $n_1=\sum_t r_t$ and $n_k=0$ for $k>1$.\\
Consumption:  $n_t$ consumers purchase duration $t$ service, so market is cleared.
\end{list}
\end{theorem}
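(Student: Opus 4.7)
My plan is to verify, for both the convex and concave parts, the three requirements of a competitive equilibrium (consumer surplus maximization, supplier profit maximization, and market clearing) for the candidate prices, production vector and consumption profile, and then to conclude efficiency by matching the outcome to the social-welfare optimum given in Theorem~\ref{thm:soc_welfare}. Since the two cases use qualitatively different price structures, I will handle them separately.

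In the convex case, the choice $\pi(h)=U(h)$ makes every consumer's surplus $U(h)-\pi(h)$ identically zero across all $h$ (including $h=0$), so any consumption profile satisfies condition (i); in particular the one in which $n_t$ consumers purchase service $t$ is admissible. The supplier's profit then equals $\sum_t n_t U(t) - c^{da}\sum_t y_t$, which, after writing $\sum_i U(h_i)=\sum_t n_t U(t)$, is exactly the social-welfare objective subject to the same feasibility constraint $\VEC r+\VEC y\morethan^w \VEC d$. Hence the supplier's optimal $(\VEC n,\VEC y)$ is the social-welfare-maximizing allocation in Theorem~\ref{thm:soc_welfare}(a): $d^*_t=r_t$ for $t<k^*$ and $d^*_t=r_{k^*}$ otherwise, with $k^*$ given by \eqref{eq:convex_kstar}. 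Translating via $n_t=d^*_t-d^*_{t+1}$ yields exactly the stated production vector (and $n_T=r_{k^*}$), so market clearing holds by construction and efficiency is immediate because supplier profit and social welfare coincide as functions of $(\VEC n,\VEC y)$.

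For the concave case, set $\alpha=\min(c^{da},U(1))$ so $\pi(h)=\alpha h$. When $k^*\ge 1$, the definition forces $U(1)\ge c^{da}$ and hence $\alpha=c^{da}$; the increments $U(h)-U(h-1)-c^{da}$ of $U(h)-c^{da}h$ are non-increasing in $h$ by concavity, and a direct sign analysis anchored at $k^*$ shows $h=k^*$ maximizes consumer surplus. When $k^*=0$, we have $\alpha=U(1)<c^{da}$, and the secant-slope property $U(h)/h\le U(1)$ for concave $U$ with $U(0)=0$ gives $U(h)-U(1)h\le 0$ with equality at $h\in\{0,1\}$, so $h=1$ is optimal. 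For profit maximization, the profit equals $\alpha\sum_t t\,n_t - c^{da}\sum_t y_t = \alpha\sum_s d_s - c^{da}\sum_t y_t$, and the $t=1$ case of the weak-majorization constraint gives $\sum_s d_s\le \sum_s r_s+\sum_t y_t$. A case split on the sign of $\alpha-c^{da}$ then produces a sharp upper bound on profit in terms of $\sum_s r_s$; the prescribed $\VEC n$ (with $\VEC y$ chosen minimally) attains this bound, and the remaining tail majorization inequalities hold trivially since the candidate $\VEC d$ is either $(N,0,\ldots,0)$ or a constant-$N$ prefix. Market clearing is again built into the construction, and efficiency follows because the allocation coincides with the concave-case social-welfare optimum \eqref{eq:soc_welfare1b}--\eqref{eq:soc_welfare1c}.

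I expect the main obstacle to be the concave-case profit maximization: the feasibility set is the intersection of $T$ weak-majorization inequalities and I need to rule out that the supplier could gain by inflating $\VEC d$ beyond what the market-clearing $\VEC n$ dictates while compensating with extra $\VEC y$. The decisive computation is the identity $\sum_t t\,n_t=\sum_s d_s$ together with the $t=1$ majorization inequality, which collapses the profit expression to a quantity bounded above by $\alpha\sum_s r_s$ (when $\alpha\le c^{da}$) and independent of how $\VEC y$ is distributed once its total is fixed; this is what pins down the prescribed production as optimal. The other pieces---verifying the consumer's pointwise optimization, writing down the feasible $\VEC y$, and matching to Theorem~\ref{thm:soc_welfare}---are direct once the supplier's optimum is in hand.
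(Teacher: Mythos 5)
Your proposal is correct and follows essentially the same route as the paper's proof: in the convex case you observe that $\pi(h)=U(h)$ makes consumer surplus identically zero and collapses the supplier's profit maximization into the social welfare problem of Theorem~\ref{thm:soc_welfare}(a), and in the concave case you bound the profit $\alpha\sum_t t\,n_t - c^{da}\sum_t y_t$ by $\alpha\sum_t r_t$ via the total-energy (first) majorization inequality and check that the prescribed bundle attains it. The only difference is that you fill in details the paper leaves implicit (the $\pi(h)=U(1)h$ consumer problem and the explicit case split on $\alpha$ versus $c^{da}$), which does not change the substance of the argument.
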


\proof{Proof} See Appendix \ref{sec:market_proof}. \endproof

\begin{remark}{\rm In essence, Theorem \ref{thm:market} is reminiscent of the first fundamental theorem of welfare economics \cite{mascolell} which states that under certain conditions competitive equilibria are Pareto efficient. Note that our market model differs from the standard market models since each consumer faces a discrete choice \cite{indivisible}.  Further, Theorem \ref{thm:market} is constructive in the sense that it identifies an equilibrium price vector.}\end{remark}


\begin{remark}{\rm In the above analysis, the supply side represents an aggregation of many suppliers. Note that  individual price taking suppliers may benefit by coordinating to offer longer service contracts in order to take advantage of higher prices for longer durations. We assume that the suppliers are able to identify and exploit all such opportunities. We further assume that even with such coordination the number of effective suppliers is large enough to justify the assumption of a competitive market. 
}
\end{remark}

\begin{remark}{\rm In the convex case, the marginal price of the $h$th slot, $\pi(h) - \pi(h-1)$ is \textit{increasing}.  This is due to the fact that from a given 
supply profile it may be possible to produce  a $m$-slot and a $k$-slot service but not a $(m+k)$-slot service, as can be seen from the definition of adequacy.  This contrasts with the assumption in  \cite{Chao-Oren86} that, for conventional generator technology, the marginal price of producing electricity for slot $h$ decreases with $h$. In the concave case above this issue does not arise since service of a single duration is produced.}
\end{remark}

\begin{remark}{\rm 
Suppose the actual chronological profile of the power is $\VEC q = (q_1, \cdots, q_T)$.
Suppose $\VEC q$ has $(m+1)$ local minima.  Then, in the market allocations in the convex case, each service will have at most $m$ interruptions.  For the same example as before Figure \ref{fig-ex1} (right) shows that there is
at most one interruption since $\VEC q = (5,2,4,1,1,1)$ has two local minima.}
\end{remark}

\subsection{Real-time  vs duration-differentiated markets}
We compare the duration-differentiated market with a real-time spot market that operates as follows.  At the beginning of slot $t$ the market determines a price $\pi_t$ for 1kW of power delivered during slot $t$. The price $\pi_t$ equates the supply function $s_t(\pi)$ and demand function $q_t(\pi)$, i.e., $s_t(\pi_t) = q_t(\pi_t)$.  We model these functions.  The supply function is straightforward.  At the beginning of slot $t$, the supplier receives for free power $r_t$ and offers it for sale inelastically.  She can also supply as much power as she wishes at price $C$ per kW-slot.  Thus the supply function is
\begin{equation} \label{a1}
s(\pi) =
\left \{
\begin{array} {ll}
\infty & \mbox{ if } \pi > c^{rt} \\
\mbox{$[p_t,\infty)$} & \mbox{ if } \pi =c^{rt}  \\
p_t & \mbox{ if } \pi < c^{rt}
\end{array}
\right .
.
\end{equation}
The demand function is a bit more complicated.  At the beginning of slot $t$, consumer $n$ determines her willingness to pay $v_n(t)$ for 1 kW in slot $t$. So the (aggregate) demand function is
\begin{equation} \label{a2}
q_t(\pi) = \sum_{i=1}^N \mathds{1}_{\{\pi \le v_n (t)\}}.
\end{equation}
The willingness to pay $v_n(t)$ will depend on the power that consumer $n$   acquired in slots $1, \cdots, t-1$.  All consumers have the same utility function $U(h)$.  If consumer $n$ had  acquired  $x$ kW-slots (out of a total of $t-1$), her willingness to pay would equal  the additional utility she will gain,
\begin{equation} \label{a3}
v_n(t) = U(x+1) - U(x).
\end{equation}
Thus the consumer is myopic: her willingness to pay is unaffected by her opportunities to make future purchases in slots $t+1, \cdots, T$.  The demand function is obtained from \eqref{a2} and \eqref{a3}.  Note that although consumers are myopic, their purchasing decisions depend on  previous decisions. Let $x_i, i=0, \cdots, t-1$, be the number of consumers that have purchased $i$ kW-slots during the first $(t-1)$ spot markets.

Consider the convex case: $U(h)-U(h-1)$ is non-decreasing.  Here, the demand function is

\begin{figure*}[ht]

\begin{equation} \label{a4}
q_t(\pi) = \left \{
\begin{array}{ll}
0 &\mbox{ if } \pi > U(t) - U(t-1) \\
x_{t-1} & \mbox { if }  U(t) - U(t-1) \ge \pi > U(t-1) - U(t-2) \\
x_{t-1}+x_{t-2} & \mbox { if }  U(t-1) - U(t-2) \ge \pi > U(t-2) - U(t-3) \\
& \cdots \\
x_{t-1}+\cdots+ x_1 & \mbox{ if } U(2)-U(1) \ge \pi > U(1) - U(0) \\
x_{t-1} + \cdots +x_0 = N & \mbox{ if } U(1) \ge \pi
\end{array}
\right . .
\end{equation} 
\end{figure*}

Now consider the concave case: $U(h)-U(h-1)$ is non-increasing.  Then the demand function is

\begin{figure*}[ht]
\begin{equation} \label{a5}
q_t(\pi) = \left \{
\begin{array}{ll}
0 &\mbox{ if } \pi > U(1) - U(0) \\
x_0 & \mbox { if }  U(1) - U(0) \ge \pi > U(2) - U(1) \\
x_0+x_1 & \mbox { if }  U(2)-U(1) \ge \pi > U(3) - U(2) \\
& \cdots \\
x_0+\cdots+ x_{t-2} & \mbox{ if } U(t-1) - U(t-2) \ge \pi > U(t-2) - U(t-3) \\
x_0 + \cdots +x_{t-1} = N & \mbox{ if } U(t) -U(t-1)\ge \pi 
\end{array}
\right . .
\end{equation} 
\end{figure*}

\begin{figure*}[ht]
\centering
\includegraphics[width=6in]{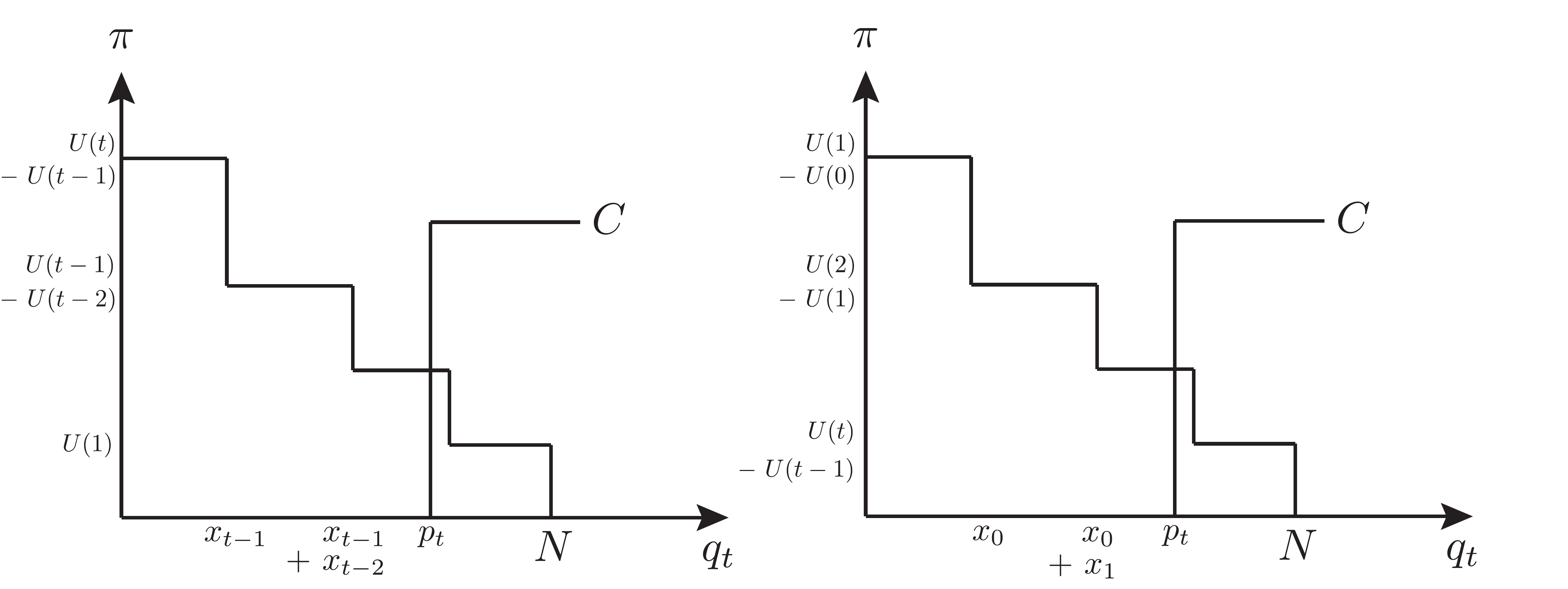}
\caption{Demand and supply functions for spot market: convex utility (left), concave utility (right).}
\label{fig-spot}
\end{figure*}
Figure \ref{fig-spot} shows the supply and demand functions  for the convex case (left) and the concave case (right).

\begin{figure*}[ht]
\begin{theorem}[Real-time market]
\label{thm:rt-market} 
\balphlist
\item Suppose $U(h) - U(h-1)$ is a non-negative, non-decreasing function of $h$ (with $U(0)=0$). 
Then, the spot price at $t$ is
\begin{equation}
\pi_t =\left \{
\begin{array}{ll}  \min(c^{rt},U(t)-U(t-1)) &\mbox{ if $r_t < x_{t-1}$}\\
                                    \min(c^{rt},U(t-1) -U(t-2)) & \mbox{ if $x_{t-1} \leq r_t < x_{t-1} + x_{t-2} $}\\
                                      \min(c^{rt},U(t-2) -U(t-3)) & \mbox{ if $x_{t-1}+x_{t-2} \leq r_t < x_{t-1} + x_{t-2} +x_{t-3}$}\\
                                       \cdots\\
                                       \min(c^{rt},U(1)) & \mbox{ if $x_{t-1}+x_{t-2}+\ldots + x_1 \leq r_t < N$}\\
                                       0 & \mbox{ if $N \leq r_t$}
                                        \end{array}
                                       \right. \label{a6}
\end{equation}

\item Suppose $U(h)- U(h-1)$ is a non-negative, non-increasing function of $h$ (with $U(0)=0$).
Then the spot price at $t$ is
\begin{equation} \label{a7}
\pi_t = \left \{
\begin{array}{ll}
\min(c^{rt}, U(1)-U(0)) & \mbox{ if } r_t < x_0 \\
\min(c^{rt}, U(2)-U(1))& \mbox{ if } x_0 \le r_t< x_0 +x_1  \\
& \cdots \\
\min(c^{rt},U(t)-U(t-1)) & \mbox{ if }  x_0 + \cdots + x_{t-2}   \le        r_t< x_0 + \cdots + x_{t-1} = N \\
0 & \mbox { if } N \leq r_t
\end{array}
\right . .
\end{equation}
\end{list}
\end{theorem}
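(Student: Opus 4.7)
The plan is to prove both parts directly by computing the market-clearing price at each slot $t$ as the crossing point of the supply correspondence \eqref{a1} and the step demand function \eqref{a4} (convex case) or \eqref{a5} (concave case). The supply correspondence is horizontal at quantity $r_t$ for $\pi < c^{rt}$, becomes perfectly elastic at $\pi = c^{rt}$, and is irrelevant for $\pi > c^{rt}$ because demand is bounded by $N$. Hence the clearing price is the minimum of $c^{rt}$ and the reservation price of the $r_t$-th most-valuable unit of demand, and the entire argument reduces to locating which step of the demand function contains the supply quantity $r_t$.

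For part (a), I would exploit the fact that $U(h)-U(h-1)$ is non-decreasing in $h$, so consumers who have already acquired more kW-slots during slots $1,\ldots,t-1$ have higher willingness to pay: the $x_{t-1}$ consumers with $t-1$ prior slots lead the demand stack at reservation price $U(t)-U(t-1)$, followed by the $x_{t-2}$ consumers at price $U(t-1)-U(t-2)$, and so on down to the $x_0$ consumers at price $U(1)-U(0)$. A straightforward case split on which cumulative sum $\sum_{i=0}^{k} x_{t-1-i}$ first exceeds $r_t$ identifies the marginal unit; its reservation price, capped by $c^{rt}$, is exactly the entry of \eqref{a6}. The final case $N \le r_t$ is immediate since supply at any positive price already exceeds aggregate demand, forcing $\pi_t = 0$.

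For part (b), I would mirror the same construction after reversing the order of consumer groups: since $U(h)-U(h-1)$ is non-increasing, the consumers with \emph{fewest} prior slots have the highest willingness to pay. Hence the demand stack starts with $x_0$ consumers at price $U(1)-U(0)$, followed by $x_1$ at $U(2)-U(1)$, and so on. The same ``which-cumulative-sum-contains-$r_t$'' case split, combined with the cap $c^{rt}$, yields \eqref{a7}. The case $N \le r_t$ again gives $\pi_t = 0$.

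The main subtlety, and the step I expect to require care, is handling the step-function character of the demand correspondence at the crossing: if $r_t$ strictly lies between two consecutive cumulative demand levels, the curves do not meet at a horizontal segment but at a vertical jump, and the market-clearing price must be interpreted as the unique price at that jump. An analogous subtlety arises at the cap: if the marginal reservation price exceeds $c^{rt}$, the supplier supplements the renewable $r_t$ with paid generation at price $c^{rt}$, and we must verify that at $\pi = c^{rt}$ the supply correspondence $[r_t,\infty)$ meets the aggregate demand evaluated at $c^{rt}$. Both subtleties are resolved by standard step-function market-clearing conventions, after which the formulas \eqref{a6} and \eqref{a7} follow from the bookkeeping in the two case analyses.
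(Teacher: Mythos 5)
Your proposal is correct and follows essentially the same route as the paper, which simply reads the formulas \eqref{a6} and \eqref{a7} off the intersection of the supply correspondence \eqref{a1} with the step demand functions \eqref{a4} and \eqref{a5}; your case split on which cumulative sum contains $r_t$ is just a more explicit bookkeeping of that intersection. The subtleties you flag about vertical jumps and the $c^{rt}$ cap are handled implicitly by the paper's graphical argument and do not change the result.
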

\end{figure*}
\proof{Proof}
From the intersection of the demand and supply functions in Figure \ref{fig-spot} one can read off the 
formulas \eqref{a6} and \eqref{a7} for the spot price.
\endproof
\begin{example}{\rm   An example shows that the allocation achieved by the real time spot market may not be efficient. In order to compare with the forward market for duration differentiated services, let $c^{da} =c^{rt} = C$. Consider one consumer and two periods, $t=1,2$, the supplier's free power is $r_1 = 0, r_2 = 1$.  For the convex utility case take $U(0) = U(1) = 0, U(2) = 10, C=8$.  In slot 1, since $C > U(1)$, $\pi_1 = U(1)$, the consumer will demand 0 power.
In slot 2, the price will be $\pi_2 = U(1) =0$, the consumer's net benefit will be 0, the producer will receive $\pi_2 = 0$. In the duration-differentiated market  (Theorem \ref{thm:market}), at
the competitive allocation the consumer gets 1 kW for 2 slots and pays $U(2) = 10$ (so her net benefit is 0), the producer will purchase 1 kW in the first slot and pay $C=8$, and will use the free power $p_1$  in the second slot and get a net revenue of $U(2) - 8 = 2$.

For the concave utility case take $U(0)=0, U(1) = U(2) = 5$, $C = 2$.  Since $U(1) > C$, the producer will supply 1 kW in slot 1 at price $\pi_1 = C=2$.
In slot 2, the consumer is indifferent between 1 kW and 0 kW.  She may or not consume $p_2 = 1$ at price $\pi_2 = 0$.
So the consumer's net benefit will be $U(1) - C = 3$, the producer's net profit is 0.  In the duration-differentiated competitive allocation, the consumer will only
purchase 1 kW in slot 2 at price 0, resulting in her net utility of 5, the producer's net profit will again be zero.}
\end{example}

\begin{remark}{\rm  Because of the inter-temporal nature of a consumer's utility, her demand in slot $t$ is contingent on her consumption in other slots.  This contingent demand cannot be met by the system of spot  markets, which is therefore not complete.  This system of markets can be completed through forward markets, like the duration-differentiated market.}
\end{remark}


\section{Rate-constrained energy services} 

We have thus far assumed that consumers demand $1$ kW for a specified number of time slots.
We now consider a more general energy service, which provides a specified quantity of energy to be delivered over $T$ time slots. However, the power level in each time slot can assume any integer value up to a maximum  rate $m$.  
We will argue that these products can be viewed as a combination of the  duration-differentiated services  at a fixed power level of $1$ kW presented before. 

Consumer $i$ specifies two quantities: $(E_i,m_i)$, $E_i$ is the total energy to be consumed over $T$ time slots at a maximum rate of $m_i$ kW per time slot. Both $E_i$ and $m_i$ are integer-valued. 
For example, a consumer specifying $(100,10)$ requires $100$ kW-slots of energy with the constraint that the consumption rate in each time slot is $0,1,2,\ldots$ or $10$ kW. A consumer specifying $(E_i,m_i)$ is satisfied with any supply allocation $A_t \in \{0,1,\ldots,m_i\}$ such that $\sum_t A_t = E_i$. The consumer model of section \ref{sec:discrete} corresponds to the case where $m_i =1$. 

Consider a consumer whose energy requirement and rate constraint are specified by the pair $(E,m)$. Let $k,r$ be such that $E = km + r$ with $r <m$. The following result shows that for the purpose of allocating supply, a consumer specifying $(E,m)$ is equivalent to a collection of $m$ consumers with $(E_n = k+1, m_n =1)$ for $n=1,2,\ldots,r$ and $(E_n = k,m_n=1)$ for $n=r+1,\ldots,m$. Hence, the adequacy and allocation results of Section \ref{sec:discrete} (which were derived for consumers with $m=1$) can be used for the variable rate consumers as well.

\begin{theorem}\label{thm:var_rate}
 Consider a consumer whose energy requirement and rate constraint are specified by the pair $(E,m)$. Let $k,r$ be such that $E = km + r$ with $r <m$. Then, a supply allocation satisfies
 \begin{equation}\label{eq:var_rate1}
  A_t \in \{0,1,\ldots, m\} \quad \sum_t A_t =E
 \end{equation} 
 if and only if 
 \begin{equation} \label{eq:var_rate2}
 A_t = \sum_{n=1}^m A^n_t,
 \end{equation}
 where $A^n_t \in \{0,1\}$ and $\sum_t A^n_t =k+1$ for $n\leq r$ and $\sum_t A^n_t = k$ for $n>r$.
\end{theorem}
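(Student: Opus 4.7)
The statement is a two-way equivalence between a rate-capped integer allocation $(A_t)$ and a sum of $m$ binary allocations with prescribed column sums, so I will prove each direction separately. The $(\Leftarrow)$ direction is the routine one: if $A_t=\sum_{n=1}^m A^n_t$ with $A^n_t\in\{0,1\}$, then clearly $A_t\in\{0,1,\ldots,m\}$, and
\[
\sum_{t=1}^T A_t \;=\; \sum_{n=1}^m \sum_{t=1}^T A^n_t \;=\; r(k+1) + (m-r)k \;=\; mk+r \;=\; E,
\]
so \eqref{eq:var_rate1} holds. No further work is needed here.

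For the $(\Rightarrow)$ direction the task is constructive: given $(A_t)$ satisfying \eqref{eq:var_rate1}, exhibit the $m$ binary sequences $A^{1},\ldots,A^{m}$ with the correct totals. My plan is a round-robin assignment indexed by the cumulative count of ``units.'' Let $c_t := \sum_{s<t} A_s$ for $t=1,\ldots,T$, and regard the $E$ unit-demands as globally indexed by $g=1,2,\ldots,E$, where the $A_t$ units delivered in slot $t$ correspond to the block $g\in\{c_t+1,\ldots,c_t+A_t\}$. Assign the unit with global index $g$ to consumer
\[
n(g) \;:=\; \bigl((g-1)\bmod m\bigr) + 1 \;\in\;\{1,\ldots,m\},
\]
and define $A^n_t := \mathds{1}_{\{n\in\{n(c_t+1),\ldots,n(c_t+A_t)\}\}}$.

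The main thing to verify is that this simple cyclic assignment meets all three requirements, and the only nontrivial point is the binary property $A^n_t\in\{0,1\}$: it holds because $A_t\leq m$, so the $A_t$ consecutive global indices $c_t+1,\ldots,c_t+A_t$ have pairwise distinct residues modulo $m$, meaning no consumer gets two units in the same slot. The slot-wise reconstruction $\sum_n A^n_t = A_t$ is immediate from the fact that exactly $A_t$ units are assigned in slot $t$. For the column-sum property, I count the indices $g\in\{1,\ldots,E\}$ with $n(g)=n$: writing $E=mk+r$ with $0\le r<m$, the number of such $g$ is $k+1$ when $1\le n\le r$ and $k$ when $r<n\le m$, exactly the required totals.

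The main (but mild) obstacle is making sure the indexing conventions line up — in particular, that the $A_t\le m$ bound is what rules out double-assignment within a slot, and that the count of residue classes produces the split $(k+1,\ldots,k+1,k,\ldots,k)$ with precisely $r$ values of $k+1$. Once those are stated carefully, both directions close the proof with no further machinery.
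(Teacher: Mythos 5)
Your proof is correct, but it takes a genuinely different route from the paper's. The paper argues by iterative peeling: it selects the $k+1$ (or $k$, once the remainder is exhausted) time slots where the current residual allocation is largest, sets $A^1_t=1$ there, subtracts, and then verifies inductively that the residual stays in $\{0,1,\ldots,m-1\}$, then $\{0,\ldots,m-2\}$, and so on, with total $k(m-j)+(r-j)^+$ after $j$ rounds; the counting arguments (``positive in at least $k+1$ slots, equal to $m$ in at most $k$ slots'') are what make each peeling step legal. Your round-robin construction replaces all of that with a single closed-form assignment $n(g)=((g-1)\bmod m)+1$ on the globally indexed units, and the three required properties fall out of two elementary observations: $A_t\le m$ consecutive integers have distinct residues mod $m$ (giving $A^n_t\in\{0,1\}$ and $\sum_n A^n_t=A_t$), and the residue classes of $\{0,\ldots,mk+r-1\}$ have sizes $k+1$ (for the first $r$ classes) and $k$ (for the rest), giving exactly the column sums $\sum_t A^n_t$. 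Your version is more direct and avoids the paper's inductive bookkeeping on the residual's range; the paper's greedy version has the mild advantage of not needing any global indexing, but both proofs are of comparable length and yours is, if anything, easier to verify. The converse direction is handled identically in both.
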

\proof{Proof.}
See Appendix \ref{sec:var_rate}.
\endproof

\section{Conclusions and Future Work} \label{sec-conclusions}

Flexible loads play a critical role in enabling deep renewable integration. 
They enable demand shaping to balance supply variability, and thus offer an effective alternative to 
conventional generation reserves. In this paper, we study a stylized model of flexible loads. These loads are modeled
as requiring constant power level for a specified duration within a delivery window. The flexibility resides in the fact that the power delivery may occur at any subset of the total period. 

We consider the supplier's decision problems for serving a given set of duration differentiated loads.  We offered a complete characterization of supply adequacy,
an algorithm for allocating supply to meet the loads and the optimal policies for making day-ahead and real-time power purchase decisions. 
We also study a simplified model of a forward market for these duration differentiated loads where supply uncertainty is not explicitly taken into account.  For this market model, we present the centralized solution that maximizes social welfare and characterize an efficient competitive equilibrium. We also contrast this equilibrium with the outcome of a stylized real-time market.

The theoretical analysis of this work can facilitate further studies required for exploiting load flexibility. For example,  the results of this paper can be used to evaluate the value of flexibility in reducing supplier's cost under different forecast models of renewable generation. Further, since supply adequacy is not assessed at each time instant, forecasting the exact profile of renewable generation may not be necessary.  Instead, the results of this paper suggest that for providing duration-differentiated services, the key metric to be estimated is the supply-duration vector of the renewable generation.

A natural extension of this work is to incorporate supply uncertainty into the forward market for duration-differentiated services.  In particular, the supplier should take into account the expected cost of its  real-time power purchases while making its decisions in the forward market.  In this setting, the supplier's cost function  should be given by $J(\cdot)$ defined in \eqref{eq:expected_cost2}.  The existence and efficiency of  competitive equilibria with this new cost function remains to be ascertained. Further research with heterogeneous consumers that have different utilities and power requirements is needed.

\appendix
\subsection{Preliminary Majorization Based Results}

Let $\VEC a = (a_1,a_2,\ldots,a_T)$ and $\VEC b = (b_1,b_2,\ldots,b_T)$ be two non-negative  vectors  arranged in non-increasing order (that is, $a_t \geq a_{t+1},$ and $b_t \geq b_{t+1}$). The majorization conditions for $\VEC a \lessthan \VEC b$ in Definition \ref{def:majorization} can equivalently be written as 
 \bromalist
 \item $\sum_{s=1}^t b_s \leq \sum_{s=1}^t a_s$, for $t=1,2,\ldots,T-1$, and 
 \item $\sum_{s=1}^T b_s = \sum_{s=1}^T a_s$.
 \end{list}
The first condition is equivalent to the following condition: Let $\VEC c$ be any rearrangement of $\VEC b$; then, for any  $S  \subset \{1,2,\ldots,T\}$, there exists $S' \subset \{1,2,\ldots,T\}$ of the same cardinality as $S$ such that $\sum_{s \in S} c_s \leq \sum_{s \in S'} a_s$.

\begin{definition}{\rm 
 We define a 1 unit {\em Robin Hood (RH) transfer} on $\VEC a$ as an operation that:
 
 \bromalist
 \item Selects indices $t,s$ such that $a_t > a_s$,
 \item Replaces $a_t$ by $a_t-1$ and $a_s$ by $a_s+1$.
 \item Rearranges the resulting vector in a non-increasing order.
 \end{list}}
\end{definition}

\begin{lemma}\label{lemma:1}
 Let $\VEC{\tilde a}$ be a vector obtained from $\VEC a$ after a 1 unit RH transfer. Then, $\VEC a \lessthan \VEC{\tilde a}$.
\end{lemma}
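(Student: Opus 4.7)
The plan is to verify the two conditions defining $\VEC a \lessthan \VEC{\tilde a}$ in the equivalent partial-sum form stated at the start of the Appendix. The total-sum condition $\sum_{s=1}^T \tilde a_s = \sum_{s=1}^T a_s$ is immediate because the transfer simply moves one unit from coordinate $t$ to coordinate $s$. So the only real content is the partial-sum inequality $\sum_{s=1}^k \tilde a^{\downarrow}_s \leq \sum_{s=1}^k a^{\downarrow}_s$ for every $k \in \{1,\ldots,T-1\}$.

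For this, I would use the variational characterization $\sum_{s=1}^k v^{\downarrow}_s = \max_{|I|=k}\sum_{i \in I} v_i$, which reduces the goal to exhibiting, for every $k$-subset $I \subset \{1,\ldots,T\}$, a $k$-subset $J$ such that $\sum_{i \in I} \tilde a_i \leq \sum_{j \in J} a_j$. Then I would case-split on the location of the transfer endpoints $t, s$ relative to $I$. If both or neither lie in $I$, take $J=I$: the $\pm 1$ cancels or neither appears, and the inequality holds with equality. If $t \in I$ and $s \notin I$, again take $J=I$; then $\sum_{i\in I}\tilde a_i = \sum_{i\in I} a_i - 1 \leq \sum_{j \in J} a_j$. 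The substantive case is $t \notin I,\ s \in I$, where I would take $J = (I \setminus \{s\}) \cup \{t\}$ and compute
\begin{equation}
\sum_{j \in J} a_j - \sum_{i \in I} \tilde a_i \;=\; (a_t - a_s) - 1 \;\geq\; 0,
\end{equation}
using $a_t > a_s$ together with the integrality of the coordinates. Combining the four cases yields the partial-sum inequalities, and together with the sum preservation this gives $\VEC a \lessthan \VEC{\tilde a}$.

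I do not anticipate any serious obstacle: the statement is essentially the classical Hardy--Littlewood--P\'olya fact that a unit Robin Hood transfer produces a majorized vector, recast in the reversed sign convention adopted in the paper. The only subtlety worth flagging is that the last case genuinely uses integer-valuedness of $\VEC a$---otherwise a unit transfer could overshoot with $a_s+1 > a_t-1$, and a real-variable argument would have to bound the transfer size by $a_t - a_s$ to keep the final inequality clean. This is entirely consistent with the paper's standing assumption that the supply and demand vectors take non-negative integer values.
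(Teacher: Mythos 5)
Your proof is correct and follows essentially the same route as the paper: both reduce the partial-sum condition to the subset characterization of majorization, case-split on whether the transfer endpoints $t,s$ lie in the chosen subset, and in the case $t \notin I$, $s \in I$ swap $s$ for $t$ and use $a_s + 1 \leq a_t$ via integrality. Your explicit remark about where integer-valuedness is needed is a nice touch but does not change the argument.
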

\proof{Proof.} Note $\VEC{\tilde a}$ is a rearrangement of the vector $\VEC{\hat {a}} =(a_1,a_2,\ldots, a_t-1,\ldots, a_s +1,\ldots, a_T)$. 
For any subset $S$ of $\{1,2,\ldots,T\}$,
\bromalist
\item if $t \in S, s\in S$ (or if $t \notin S, s\notin S$), then $\sum_{r \in S} \hat a_r = \sum_{r \in S} a_r$.
\item if $t \in S, s\notin S$, then $\sum_{r \in S} \hat a_r = \sum_{r \in S} a_r -1$.
\item if $t \notin S, s\in S$, then $\sum_{r \in S} \hat a_r = a_s+1 + \sum_{r \in S\setminus\{s\}} \hat a_r \leq a_t +  \sum_{r \in S\setminus\{s\}} \hat a_r = a_t +  \sum_{r \in S\setminus{s}} a_r $. 
\end{list}
Therefore, $\VEC a \lessthan \VEC{\tilde a}$ (see the equivalent condition in the definition of majorization).

\endproof

\begin{lemma}\label{lemma:two}
Suppose $\VEC a \lessthan \VEC b$. If for any $t$, $1 \leq t \leq T$, the following conditions hold:
\bromalist
\item $a_j=b_j$ for all $j < t$,
\item $a_t-a_T \leq 1$.
\end{list}
Then, (a) $a_t=b_t$, and (b) $\VEC a = \VEC b$.
\end{lemma}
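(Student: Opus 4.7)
The plan is to reduce the problem to analyzing the restricted tails $\VEC a|_{[t,T]}$ and $\VEC b|_{[t,T]}$ and to show that they coincide, which immediately yields both (a) and (b).

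First I would exploit the common prefix $a_j = b_j$ for $j<t$ to convert the majorization $\VEC a \lessthan \VEC b$ into working facts about the tails. Subtracting the common prefix from the total equality in Definition~\ref{def:majorization} gives the tail sum equality $\sum_{s=t}^T a_s = \sum_{s=t}^T b_s$. From the equivalent ``head'' form recorded at the start of the appendix, $\sum_{s=1}^t b_s \le \sum_{s=1}^t a_s$, which combined with the prefix equality gives $b_t \le a_t$; since $\VEC b$ is non-increasing this yields $b_s \le a_t$ for every $s \in [t,T]$. From the tail form at $t=T$ I obtain $a_T \le b_T$, and monotonicity of $\VEC b$ yields $b_s \ge a_T$ for every $s \in [t,T]$. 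Hence $a_T \le b_s \le a_t$ for all $s \in [t,T]$.

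Next I would combine the hypothesis $a_t - a_T \le 1$ with the integrality of the entries (implicit from the paper's setting, where the vectors of interest are integer-valued supply and demand-duration vectors) to conclude that $b_s \in \{a_T, a_T+1\}$ for every $s \in [t,T]$, and similarly every $a_s$ takes values in that same two-element set. Let $q$ denote the number of positions in $[t,T]$ at which $a_s = a_T+1$; the tail sum equality then forces $\VEC b|_{[t,T]}$ to contain exactly $q$ entries equal to $a_T+1$ and $(T-t+1-q)$ entries equal to $a_T$. Since both tails are non-increasing, each must list its $q$ copies of $a_T+1$ first followed by its $a_T$'s, so $\VEC a|_{[t,T]} = \VEC b|_{[t,T]}$. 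In particular $a_t = b_t$, proving (a); combined with the prefix equality this gives $\VEC a = \VEC b$, proving (b). The degenerate case $a_t = a_T$ folds into the same argument with $q = T-t+1$.

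The main obstacle is establishing the sharp entry-wise bound $a_T \le b_s \le a_t$, which requires carefully combining the head and tail forms of the majorization inequality with the monotonicity of $\VEC b$; once this bound is in hand the counting argument forcing $\VEC a|_{[t,T]} = \VEC b|_{[t,T]}$ is essentially immediate. A secondary point worth noting is that the conclusion genuinely depends on integrality: otherwise $\VEC a = (3,1,0)$ and $\VEC b = (3,\tfrac{1}{2},\tfrac{1}{2})$ with $t=2$ would satisfy every hypothesis of the lemma yet violate $a_t = b_t$, so the proof must use this implicit assumption somewhere.
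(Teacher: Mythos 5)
Your proof is correct, and it shares the paper's core mechanism --- integrality plus the hypothesis $a_t - a_T \le 1$ forces the tail of $\VEC a$ to be nearly constant, and the tail-sum equality then leaves no room for $\VEC b$ to differ --- but the packaging is different. The paper first isolates part (a): it assumes $b_t < a_t$, uses integrality to get $b_t \le a_t - 1$, bounds $\sum_{i=t}^T b_i \le (T-t+1)(a_t-1)$ by monotonicity of $\VEC b$, and contradicts $\sum_{i=t}^T a_i > (T-t+1)(a_t-1)$; part (b) then follows by reapplying part (a) sequentially at $t+1, t+2, \ldots$ (the hypothesis $a_{t+1}-a_T\le 1$ propagates since $\VEC a$ is non-increasing). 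You instead pin down every tail entry of \emph{both} vectors to the two-element set $\{a_T, a_T+1\}$ --- which requires the extra observation $b_s \ge b_T \ge a_T$ from the $s=T$ majorization inequality, a fact the paper never needs --- and then settle (a) and (b) simultaneously by counting the occurrences of $a_T+1$ and invoking monotonicity. Your route is marginally longer in setup but avoids the induction and makes the structure of the tails completely explicit; your remark that integrality is essential, with the counterexample $\VEC a = (3,1,0)$, $\VEC b = (3,\tfrac12,\tfrac12)$, is a worthwhile observation that the paper leaves implicit (its own step $b_t < a_t \Rightarrow b_t \le a_t - 1$ uses integrality just as silently).
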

\proof{Proof of Lemma~\ref{lemma:two}.}
$\VEC a \lessthan \VEC b$ and $a_j=b_j$ for all $j < t$ imply that 
\begin{align}
&\sum_{i=t}^s b_i \leq \sum_{i=t}^s a_i, \quad s=t,t+1,\ldots,T-1,  \notag \\
&\sum_{i=t}^T a_i = \sum_{i=t}^T b_i
\end{align}
In particular, $b_t \leq a_t$.
If $b_t < a_t$, then $\sum_{i=t}^T b_i \leq (T-t+1)(a_t-1)$. On the other hand, since $a_T \geq a_t-1$. $\sum_{i=t}^T a_i > (T-t+1)(a_t-1)$. Therefore, $\sum_{i=t}^T a_i \neq \sum_{i=t}^T b_i$, which contradicts the conditions of the lemma. Thus, $b_t$ must be equal to $a_t$. Reapplying the first part of the lemma for $i=t+1$, then gives $a_{t+1}=b_{t+1}$. Proceeding sequentially till $T$ proves the second part of the lemma. 
\endproof

\begin{lemma} \label{lemma:3}
Let $\VEC a \lessthan \VEC b$ and $\VEC a \neq \VEC b$. Then, there exists a 1 unit RH operation on $\VEC a$ that gives a vector $\VEC{\tilde a } \neq \VEC a$ satisfying $\VEC a \lessthan \VEC{\tilde a} \lessthan \VEC b$.
\end{lemma}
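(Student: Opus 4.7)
The plan is to use the classical one-step construction from majorization theory, adapted to the paper's reversed convention. I would first put both $\VEC a$ and $\VEC b$ in non-increasing order (using the rearrangement invariance of $\lessthan$). Let $i$ be the smallest index where $a_i \neq b_i$. Since $a_s = b_s$ for $s < i$, the majorization condition $\sum_{s=1}^{i} a_s \geq \sum_{s=1}^{i} b_s$ (the equivalent form recalled at the start of the appendix) forces $a_i > b_i$. Because $\sum a_s = \sum b_s$, some index must have $a_s < b_s$; pick $k$ to be the \emph{smallest} such index. Necessarily $k > i$, and moreover $a_s \geq b_s$ for all $i < s < k$ by minimality of $k$.

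Next I would apply a one-unit RH transfer to $\VEC a$ at positions $i$ and $k$: replace $a_i$ by $a_i - 1$ and $a_k$ by $a_k + 1$, rearrange in non-increasing order, and call the result $\VEC{\tilde a}$. The transfer is legal since
\[
a_i > b_i \geq b_k > a_k,
\]
using that $\VEC b$ is non-increasing and $k > i$, so in particular $a_i > a_k$. Lemma \ref{lemma:1} then gives $\VEC a \lessthan \VEC{\tilde a}$; and because $a_i \neq a_k$, the multiset of entries changes under the transfer, so $\VEC{\tilde a} \neq \VEC a$.

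The main obstacle is verifying $\VEC{\tilde a} \lessthan \VEC b$. Since majorization is rearrangement-invariant, I would work with the unrearranged vector $\VEC{a'} := (a_1,\ldots,a_i-1,\ldots,a_k+1,\ldots,a_T)$ and bound $\sum_{s=1}^{t} (\VEC{a'})^{\downarrow}_s$ from below by $\sum_{s \in S} a'_s$ for a convenient size-$t$ subset $S$. Take $S = \{1,\ldots,t\}$ and split into cases. For $t < i$, neither $i$ nor $k$ lies in $S$, so $\sum_{s \in S} a'_s = \sum_{s=1}^{t} a_s \geq \sum_{s=1}^{t} b_s$. For $t \geq k$, both $i$ and $k$ lie in $S$, giving $\sum_{s \in S} a'_s = \sum_{s=1}^{t} a_s \geq \sum_{s=1}^{t} b_s$. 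The delicate range is $i \leq t < k$: then only $i \in S$, so $\sum_{s \in S} a'_s = \sum_{s=1}^{t} a_s - 1$. Here I would combine $a_s = b_s$ for $s < i$, $a_i \geq b_i + 1$ (integer slack), and $a_s \geq b_s$ for $i < s \leq t$ (from minimality of $k$) to conclude
\[
\sum_{s=1}^{t} a_s \;\geq\; \sum_{s=1}^{t} b_s + 1,
\]
which exactly absorbs the $-1$ produced by the transfer. Finally, the RH transfer preserves the total, so $\sum_{s=1}^{T} \tilde a_s = \sum_{s=1}^{T} a_s = \sum_{s=1}^{T} b_s$, completing $\VEC{\tilde a} \lessthan \VEC b$. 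The critical step is the case $i \leq t < k$, where both the minimality of $k$ and the integer-valued strict inequality $a_i \geq b_i + 1$ are essential to generate the one-unit slack needed.
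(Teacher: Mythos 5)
Your proof is correct, and it follows the same overall strategy as the paper (a single one-unit RH transfer out of the first index of disagreement, followed by a partial-sum verification), but it differs in the one choice that matters: the receiving index. The paper takes $s$ to be the smallest index with $a_t - a_s > 1$, a condition on $\VEC a$ alone, and must invoke its Lemma~2 to guarantee such an $s$ exists; the subsequent verification of $\VEC{\tilde a} \lessthan \VEC b$ then rests on the bound $\tilde a_j \geq a_t - 1 \geq b_t \geq b_j$ for the intermediate indices. You instead take $k$ to be the smallest index with $a_k < b_k$, whose existence follows immediately from the equality of totals, and your verification in the delicate range $i \leq t < k$ uses termwise domination $a_s \geq b_s$ plus the integer slack $a_i \geq b_i + 1$. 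This is the classical Hardy--Littlewood--P\'olya/Muirhead one-step construction; it buys you independence from the paper's Lemma~2, at the cost of referencing $\VEC b$ when choosing where to deposit the unit. Both yield a vector strictly between $\VEC a$ and $\VEC b$ in the majorization order, which is all the finite-termination argument of Claim~1 needs. One small imprecision: $a_i \neq a_k$ alone does not force the multiset to change (a transfer between values differing by exactly $1$ swaps them and leaves the multiset fixed, which is precisely why the paper insists on $a_t - a_s > 1$); what saves you is that your chain $a_i > b_i \geq b_k > a_k$ over the integers gives $a_i \geq a_k + 2$, so the count of entries equal to $a_i$ genuinely drops and $\VEC{\tilde a} \neq \VEC a$. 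You should state that two-unit gap explicitly rather than the weaker $a_i \neq a_k$.
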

\proof{Proof.} 
Let $t$ be the smallest index  such that $a_t \neq b_t$. Then, $b_t<a_t$ since $\sum_{i=1}^t b_i \leq \sum_{i=1}^t a_i$ and the two vectors have the same first $t-1$ elements. Let $s>t$ be the smallest index such that $a_t-a_s>1$.  Such $s$ must exist, otherwise Lemma \ref{lemma:two} would imply that $\VEC a = \VEC b$. Consider a 1 unit RH transfer from $t$ to $s$. Let $\VEC{\tilde a}$ be the resulting vector. Then, by Lemma \ref{lemma:1}, $\VEC a \lessthan \VEC{\tilde a}$. 

Also, if $k$ is the number of elements of $\VEC a$ with value equal to  $a_t$, then the number of elements of $\VEC{\tilde a}$ with value equal to $a_t$ is $k-1$. Therefore, $\VEC{\tilde a } \neq \VEC a$.
 
Further, it is clear that $\VEC a$ and $\VEC {\tilde{a}}$ have the same first $t-1$ elements (since the RH operation depleted 1 unit from $a_t$ and added it to $a_s<a_t-1$, the non-increasing rearrangement would not change the $t-1$ highest elements.) Similarly, $\VEC a$ and $\VEC{\tilde{a}}$ have the same elements from index $s$ to $T$. Further, from the definition of $s$, $a_j \geq a_t -1$ for $t \leq j <s$. Since,  $\tilde{a}_t,\tilde{a}_{t+1} \ldots, \tilde{a}_{s-1}$, must be a rearrangement of $a_{t}-1,a_{t+1}\ldots, a_{s-1}$, it follows that $\tilde a_j \geq a_t -1$ for $t \leq j <s$.

We now prove that $\VEC{\tilde a} \lessthan \VEC b$. 
\bromalist
\item If $j < t$ or $j \geq s$, then $\sum_{i=1}^j \tilde a_i = \sum_{i=1}^j a_i \geq \sum_{i=1}^j b_i$.
\item If $t \leq j < s$, then 
\begin{align}
&\sum_{i=1}^j \tilde a_i = \sum_{i=1}^{t-1}  a_i + \sum_{i=t}^{j}  \tilde a_i  \notag\\
& =\sum_{i=1}^{t-1}  b_i +\sum_{i=t}^{j}  \tilde a_i \geq \sum_{i=1}^{t-1}  b_i +(j-t+1)(a_t-1), \label{eq:one}
\end{align}
where the first and second equalities are true because the first $t-1$ elements of $\VEC a, \VEC{\tilde a}$ and $\VEC b$ are the same, and  last inequality follows from the fact that for $t \leq i <s$, $\tilde{a}_i \geq (a_{t}-1)$.
Moreover, 
\begin{align}
\sum_{i=1}^jb_i = \sum_{i=1}^{t-1}  b_i+  \sum_{i=t}^jb_i \leq \sum_{i=1}^{t-1}  b_i+ (j-t+1)b_t \leq \sum_{i=1}^{t-1}  b_i+ (j-t+1)(a_t-1),\label{eq:two}
\end{align}
where the last inequality follows from $b_t < a_t$. Equations \eqref{eq:one} and \eqref{eq:two} imply that
$\sum_{i=1}^jb_i \leq \sum_{i=1}^j \tilde a_i$, or $\VEC{\tilde a} \lessthan \VEC b$. 
\end{list}
\endproof

\begin{claim}
 Let  $\VEC a \lessthan \VEC b$ with $\VEC a \neq \VEC b$. Then, there exists a finite sequence of 1 unit RH transfers that can be applied on $\VEC a$ to get $\VEC b$.
\end{claim}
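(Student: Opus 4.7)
The plan is to iterate Lemma~3 and show the iteration terminates at $\VEC b$. Starting from $\VEC a^{(0)} := \VEC a$, whenever $\VEC a^{(k)} \neq \VEC b$ we have $\VEC a^{(k)} \lessthan \VEC b$, so Lemma~3 supplies a 1-unit RH transfer that produces $\VEC a^{(k+1)}$ with $\VEC a^{(k)} \lessthan \VEC a^{(k+1)} \lessthan \VEC b$ and $\VEC a^{(k+1)} \neq \VEC a^{(k)}$. If the sequence ever reaches $\VEC a^{(K)} = \VEC b$, then the concatenation of the $K$ individual RH transfers is the desired finite sequence of transfers converting $\VEC a$ into $\VEC b$.

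To prove termination in finitely many steps, I would introduce the potential $\phi(\VEC v) := \sum_i v_i^2$. For a single 1-unit RH transfer that replaces the pair $(v_t, v_s)$ with $(v_t-1, v_s+1)$, a direct expansion (noting that rearrangement into non-increasing order does not change $\phi$) gives
\[
\phi(\VEC{\tilde v}) - \phi(\VEC v) = -2(v_t - v_s - 1).
\]
The transfer invoked by Lemma~3 is not arbitrary: its proof selects $s$ with $a_t - a_s > 1$, and because the entries are integer-valued this means $a_t - a_s \ge 2$, so $\phi$ strictly decreases by at least $2$ at each iterate. Since $\phi$ is a non-negative integer, the sequence $\phi(\VEC a^{(k)})$ can take at most $\phi(\VEC a)/2 + 1$ distinct values, which forces the iteration to stop after finitely many steps; and by Lemma~3 the iteration can only stop at $\VEC a^{(K)} = \VEC b$.

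The one point I expect to be the main subtlety is verifying that the RH transfer delivered by Lemma~3 is genuinely of the ``large-gap'' form $a_t - a_s \ge 2$, which is what lets the potential drop by a uniform integer amount at every step and therefore makes termination unconditional. This is already built into the construction in the proof of Lemma~3 (via the choice of the smallest $s > t$ with $a_t - a_s > 1$), so the only job here is to extract that property and feed it into the potential argument. Integrality of the entries of $\VEC a$ and $\VEC b$ (inherited from their interpretation as supply/demand profiles) then converts the strict decrease of $\phi$ into termination in finitely many steps with no further bookkeeping.
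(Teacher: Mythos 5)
Your proof is correct and follows the same skeleton as the paper's: repeatedly invoke Lemma~\ref{lemma:3} to produce a chain $\VEC a = \VEC a^{(0)} \lessthan \VEC a^{(1)} \lessthan \cdots \lessthan \VEC b$ of distinct iterates, and argue the chain must terminate at $\VEC b$. The only place you diverge is the termination argument. The paper observes that consecutive iterates are distinct (indeed all iterates are pairwise distinct, by antisymmetry of $\lessthan$) and that they all live in the finite set of non-negative integer vectors majorizing $\VEC b$ (finite because all such vectors share the same total sum), so the process must halt, and by construction it can only halt at $\VEC b$. You instead use the potential $\phi(\VEC v)=\sum_i v_i^2$, which is rearrangement-invariant and drops by exactly $2(v_t-v_s-1)$ under a transfer from $t$ to $s$; your key observation that the specific transfer produced in the proof of Lemma~\ref{lemma:3} has $a_t-a_s>1$, hence $a_t-a_s\ge 2$ by integrality, is correct and is genuinely needed, since an arbitrary RH transfer with $a_t-a_s=1$ would leave $\phi$ unchanged. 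Your route buys a little more: an explicit upper bound of $\phi(\VEC a)/2$ on the number of transfers (one could sharpen it to $(\phi(\VEC a)-\phi(\VEC b))/2$ steps being an upper bound, with equality when every transfer has gap exactly $2$), whereas the paper's pigeonhole argument gives finiteness with no quantitative bound. Both arguments rely on integrality of the vectors, which is valid here since supply and demand-duration vectors are integer-valued throughout the paper.
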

\proof{Proof. } The claim is established using Lemmas \ref{lemma:two} and \ref{lemma:3}.
Let $\VEC a^0 = \VEC a$.
\bromalist
\item For $n =1,2,\ldots$, if $\VEC a^{n-1} \neq \VEC b$,  use Lemma \ref{lemma:3} to construct $\VEC a^n \neq \VEC a^{n-1}$ such that $\VEC a^{n-1} \lessthan \VEC a^n \lessthan \VEC b$. Then, $\VEC a^n \neq \VEC a^m$ for any $m<n-1$ 
(otherwise, we would have $\VEC a^m=\VEC a^n \lessthan \VEC a^{n-1} \lessthan \VEC a^n \implies \VEC a^n=\VEC a^{n-1}$).
\item If  $\VEC a^{n-1} = \VEC b$, stop.
\end{list}

Since there are only finitely many non-negative integer valued vectors that majorize $\VEC b$, this procedure must eventually stop and it can do so only if $\VEC a^{n-1}=\VEC b$, proving the claim.                                 
\endproof

\subsection{Proof of Theorem \ref{thm:two}} \label{sec:two}

We require with the following intermediate result.
\begin{lemma}\label{lemma:one}
For a given demand profile, if $\VEC a$ is an exactly adequate supply profile, then any  supply profile $\VEC b$ satisfying $\VEC a \lessthan \VEC b$  is also exactly adequate.
\end{lemma}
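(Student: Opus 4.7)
The plan is to reduce everything to the one-step case and then iterate. First, using Lemma 1 (rearrangements preserve exact adequacy), I may assume without loss of generality that both $\VEC a$ and $\VEC b$ are arranged in non-increasing order, because proving exact adequacy of $\VEC b^{\downarrow}$ is equivalent to proving it for $\VEC b$. Since $\VEC a \lessthan \VEC b$ with both vectors sorted, the Claim proved just above lets me fix a finite sequence $\VEC a = \VEC a^{0}, \VEC a^{1}, \ldots, \VEC a^{K} = \VEC b$ in which each $\VEC a^{n}$ is obtained from $\VEC a^{n-1}$ by a single 1-unit Robin Hood transfer. So the whole lemma comes down to showing that a single RH transfer preserves exact adequacy; then I can just induct on $n$.

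Now for the single-step claim: suppose $\VEC a$ is exactly adequate and $\VEC{\tilde a}$ arises from $\VEC a$ by moving one unit from coordinate $t$ to coordinate $s$ with $a_t > a_s$, followed by a rearrangement. By Lemma 1, I only need to exhibit an exact allocation for the unsorted intermediate vector $\VEC{a}'=(a_1,\ldots,a_t-1,\ldots,a_s+1,\ldots,a_T)$. Take an exact allocation $A$ for $\VEC a$, and let $S_t=\{i:A(i,t)=1\}$, $S_s=\{i:A(i,s)=1\}$, so $|S_t|=a_t$ and $|S_s|=a_s$. Since $|S_t|>|S_s|$, the set $S_t\setminus S_s$ is nonempty; pick any $i^{\star}$ in it. Define $A'$ by setting $A'(i^{\star},t)=0$, $A'(i^{\star},s)=1$, and $A'=A$ elsewhere. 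Then load $i^{\star}$ loses one slot and gains a different one, so its row sum is unchanged; all other row sums are unchanged; and the column sums become $a_t-1$ in slot $t$, $a_s+1$ in slot $s$, and $a_r$ in every other slot $r$. Hence $A'$ is an exact allocation for $\VEC a'$, proving exact adequacy of $\VEC a'$ and therefore of $\VEC{\tilde a}$.

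Iterating this across the RH chain from $\VEC a^{\downarrow}$ to $\VEC b^{\downarrow}$ gives exact adequacy of $\VEC b^{\downarrow}$, and a final application of Lemma 1 promotes this back to $\VEC b$. The only place that needs any genuine care is the single-step argument: specifically, verifying that $S_t\setminus S_s$ is nonempty (which follows immediately from $|S_t|>|S_s|$ by pigeonhole) and that the reassignment of $i^{\star}$ simultaneously preserves all row sums and adjusts only the two affected column sums by exactly $\pm 1$. Everything else is bookkeeping on top of the preliminary RH/majorization lemmas the appendix has already established.
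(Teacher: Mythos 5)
Your proof is correct and follows essentially the same route as the paper's: reduce to non-increasing rearrangements, decompose $\VEC a \lessthan \VEC b$ into a finite chain of 1-unit Robin Hood transfers via the Claim, and show that a single transfer preserves exact adequacy by reassigning a load $i^{\star}$ with $A(i^{\star},t)=1$ and $A(i^{\star},s)=0$, which exists because $a_t > a_s$. Your treatment is, if anything, slightly more explicit about the rearrangement bookkeeping, but there is no substantive difference.
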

\proof{Proof.} Without loss of generality, we will assume that $\VEC a$ and $\VEC b$ are arranged in non-increasing order.
Since $\VEC b$ can be obtained from $\VEC a$ by a sequence of 1 unit RH transfers (see Claim 1, Appendix A), we simply need to prove that a 1 unit RH transfer preserves exact adequacy. Consider an exactly adequate profile $\VEC a$ and let $A$ be the corresponding allocation function. Consider  a 1 unit RH transfer  from time $t$ to $s$ (without rearrangement) that gives a new profile $\VEC{\tilde a}$.   Let $i$ be a load for which $A(i,t) =1$ but $A(i,s) =0$. Such an $i$ must exist because $\sum_j A(j,t) = a_t > a_s = \sum_j A(j,s)$. Under the new profile, the allocation rule 
\begin{align}
\tilde{A}(j,r) =\left \{\begin{array}{ll} & A(j,r) ~~r\neq t,s \mbox{~or~} j\neq i\\
                                      &0 ~~r=t,j=i \\
                                      &1 ~~r=s,j=i \\
                                        \end{array}
                                       \right.
\end{align} 
establishes exact adequacy.                      
\endproof

\proof{\bf Proof of Theorem \ref{thm:two} (a):}  \  Observe that $\VEC d $ is exactly adequate: the allocation function $A(i,t) = \mathds{1}_{\{t \leq h_i\}}$ meets the exact adequacy requirements under $\VEC d$. Therefore, any $\VEC p$ satisfying $\VEC p \morethan \VEC d$ is also exactly adequate. 
 
To prove necessity, suppose $\VEC p$ is exactly adequate and $A$ is the corresponding allocation function. Consider any set $\mathcal{S} \subset \{1,2,\ldots,T\}$ of cardinality $s$. Then, because $A(i,t) \in \{0,1\}$ and $\sum_{t=1}^T A(i,t) = h_i$, it follows that
\begin{equation} \label{eq:necc1}  \sum_{t \in \mathcal{S}} A(i,t) \leq \min(s,h_i) = \sum_{t=1}^s \mathds{1}_{\{t \leq h_i\}}. \end{equation}
Summing over $i$,
\begin{align}
 &~~\sum_{i=1}^N\sum_{t \in \mathcal{S}} A(i,t) \leq \sum_{i=1}^N\sum_{t =1}^s \mathds{1}_{\{t \leq h_i\}} \notag \\
 &\implies \sum_{t \in \mathcal{S}}\sum_{i=1}^N A(i,t) \leq \sum_{t=1}^s \sum_{i=1}^N\mathds{1}_{\{t \leq h_i\}} \notag \\
 &\implies \sum_{t \in \mathcal{S}} p_t \leq \sum_{t=1}^s d_t  \quad \mbox{for all $\mathcal{S}$ with $|\mathcal{S}| =s$}\notag \\
 &\implies   \sum_{t=1}^s p^{\downarrow}_t\leq \sum_{t=1}^s d_t\label{eq:necc2}
\end{align}
For $s=T$, the inequality in \eqref{eq:necc1} becomes an equality resulting in an equality in \eqref{eq:necc2}. 
\endproof

\proof{\bf Proof of Theorem \ref{thm:two} (b):}
To prove necessity, suppose $\VEC p$ is adequate and $A$ is the corresponding allocation function. Then, clearly, for any set $\mathcal{S} \subset \{1,\ldots,T\}$ of cardinality $s$
\begin{equation}
\sum_{t \in \mathcal{S}} p_t \geq \sum_{t \in \mathcal{S}} \sum_{i=1}^N A(i,t)\label{eq:app4_eq1}
\end{equation}
 Further, because $A(i,t) \in \{0,1\}$ and $\sum_{t=1}^T A(i,t) = h_i$, it follows that
\begin{align}  
 &\sum_{t \in \mathcal{S}} A(i,t) \geq \max(h_i-s+1,0) = \sum_{t=s}^T \mathds{1}_{\{t \leq h_i\}} \notag  
 \end{align}
Summing over $i$,
\begin{align}
 &~~\sum_{i=1}^N\sum_{t \in \mathcal{S}} A(i,t) \geq \sum_{i=1}^N\sum_{t=s}^T \mathds{1}_{\{t \leq h_i\}} \notag \\
 &\implies \sum_{t \in \mathcal{S}}\sum_{i=1}^N A(i,t) \geq \sum_{t=s}^T \sum_{i=1}^N\mathds{1}_{\{t \leq h_i\}} \notag \\
 &\implies \sum_{t \in \mathcal{S}}\sum_{i=1}^N A(i,t) \geq  \sum_{t=s}^T d_t. \label{eq:app4_eq2}
\end{align}
Combining \eqref{eq:app4_eq1} and \eqref{eq:app4_eq2} proves the necessity conditions.

To prove sufficiency, let $\Delta = \sum_{t=1}^T p_t - \sum_{t=1}^T d_t$. Then, $\Delta \geq 0$. Consider a new demand profile $\VEC d^{\Delta}$ defined as $d^{\Delta}_t := d_t + \Delta\mathds{1}_{\{t \leq 1\}}$. This new demand profile corresponds to the original demand profile augmented  with $\Delta$ ``fictitious loads'' each requiring $1$ unit of energy for $1$ time slot. It is easy to see that $\VEC p \morethan^w \VEC q$ implies $\VEC p \morethan \VEC d^{\Delta}$. Therefore, by Theorem \ref{thm:two}, $\VEC p$ is exactly adequate for the augmented demand profile $\VEC d^{\Delta}$ which implies that it must be adequate for the original demand profile $\VEC q$. 
\endproof

\subsection{Proof of Theorem \ref{thm:three}}\label{sec:app_3}
Since $\VEC p$ is adequate, there must exist at least one allocation function $B(i,t)$ such that
\[ \sum_t B(i,t) = h_i, \quad  \sum_i B(i,t) \leq p_t.\]
Let $\mathcal{B}_1$ be the set of loads served at time $1$ under the allocation rule $B(i,t)$. If $\mathcal{A}_1 \neq \mathcal{B}_1$, pick a load $i \in \mathcal{A}_1\setminus\mathcal{B}_1$ and $j \in \mathcal{B}_1\setminus\mathcal{A}_1$. Then, $h_i \geq h_j$. Therefore, there must exist a time $s>1$ such that $B(i,s) =1$ but $B(j,s)=0$. Consider a new allocation rule $B^1(\cdot,\cdot)$ obtained by swapping the load $i$  at time $s$ and the load $j$ at time $1$, that is, the new allocation rule $B^1$ is identical to $B$ except that for $t=1,s$
\[ B^1(i,t) = B(j,t) \quad \mbox{and} \quad B^1(j,t)=B(i,t). \]
It is straightforward to establish that the new allocation rule satisfies the adequacy requirements. One can proceed  with this swapping argument one load at a time until the set of loads being served at time $1$ is equal to $\mathcal{A}_1$.

For time $t>1$, the same swapping argument works after $h_i$ are replaced by the number of time slots each load needs to be served from time $t$ to the final time, which is precisely $x_i(t)$. 

\subsection{Proof of Theorem \ref{thm:procurementb}}\label{sec:procurementb}

We first observe that the strategy described in the theorem is a valid strategy under Case B since it only uses the information available until time $t$. Clearly, the prescribed strategy is valid under Case A as well. We will now show that it yields an adequate supply in the end.

From the definition of $a_1$, we have $(p_1+a_1) \morethan^w d_T$. Now assume that $(p_1+a_1,\ldots,p_{t-1}+a_{t-1}) \morethan^w (d_{T-t+2},\ldots,d_T)$. Then, there exists an $a_t$ such that 
$(p_1+a_1,p_2+a_2,\ldots,p_{t-1}+a_{t-1},p_t+a_t) \morethan^w (d_{T-t+1},d_{T-t+2},\ldots,d_T)$. 
Thus, the supplier's optimization problem at time $t$ has a solution. This is given by $\min a_t $ subject to
\begin{eqnarray*}
& & p_t+a_t \geq d_T \\
& & p_s+a_s+p_t+a_t \geq d_T + d_{T-1} \quad \mbox{for all~}s <T \\
& & \sum_{s=s_1,s_2,\ldots,s_k}(p_s+a_s) + (p_t+a_t) \geq d_{T-k}+\cdots+ d_T \quad \mbox{for all~} s_1 < \cdots < s_k <t  \mbox{ and~} k \leq t-1
\end{eqnarray*}
It is clear that the prescribed decision strategy results in an adequate net supply since at time $T$, the supplier's optimization ensures that $\VEC p+\VEC a \morethan^w \VEC d$.

We now argue that the prescribed strategy is the optimal one. Let $\VEC b = (b_1,\ldots,b_T) \geq 0$ be the optimal sequence of purchase decisions made in Case A Then, adequacy constrain implies that $\VEC p + \VEC b \morethan^w \VEC d$. Because, the total supply is now adequate, it can be allocated to consumers according to the LLDF rule.

 Starting at time $1$, we must have $b_1 \geq  a_1 =(d_T-p_1)^+$. Otherwise, $(p_1 + b_1) < d_T$ which means that $\VEC p + \VEC b$ could not be adequate for the given demand.  Suppose $b_1 > a_1$. Let $\mathcal{B}_1$ be the set of consumers served under a LLDF allocation rule at time $1$ when the supply is $\VEC p + \VEC b$ and $\mathcal{A}_1$ be the set of consumers served under a LLDF allocation rule at time $1$ when the supply is $\VEC p + \VEC a$. Since, $p_1 + b_1 \geq p_1 + a_1$, $\mathcal{B}_1 \supset \mathcal{A}_1$. Let $i \in \mathcal{B}_1 \setminus \mathcal{A}_1$. Because, we know that $\VEC p + \VEC a$ is adequate and that consumer $i$ is not being served at time $1$, there must be a future time $s$ such that consumer $i$ is served at $s$ under the supply $\VEC p + \VEC a$ but not under $\VEC p + \VEC b$. We now consider a modification of the purchase decisions $\VEC b$ constructed as follows: 
\begin{enumerate}
\item Change $b_1$ to $b_1-1$ and not serve consumer $i$ at time $1$.
\item At time $s$, if there was excess power available when the supply was $\VEC p + \VEC b$, then use it to serve consumer $i$ at time $s$. Otherwise, change $b_s$ to $b_s +1$ with the new unit of power given to consumer $i$.
\end{enumerate}
Denote this modified vector of decisions by $\VEC{\tilde{b}}$. It is clear that this modified decision is also optimal since it ensures adequacy and its total cost does not exceed the cost of $\VEC b$. We can repeat this argument until $\tilde{b}_1 = a_1$. Therefore,  there is an optimal vector of purchase decisions of the form
$(a_1,\tilde{b}_2,\ldots,\tilde{b}_T)$.

We now proceed by induction. Suppose that the optimal decision vector is $(a_1,\ldots,a_{t-1},b_t,\ldots,b_T)$. Then, $b_t \geq a_t$ otherwise $\VEC p + \VEC b \not \morethan^w d$. If $b_t >a_t$, we can use the same rearrangement argument used at time $1$ to construct modifications of the optimal decision vector that is of the form $(a_1,\ldots,a_{t},\tilde{b}_{t+1},\ldots,\tilde{b}_T)$. Continuing sequentially till the final time $T$, we conclude that $(a_1,\ldots,a_T)$ must be optimal.

For any vector $\VEC x = (x_1,\ldots,x_T)$ define $F(\VEC x) := \max_{\mathcal{S} \subset \{1,2,\ldots,T\}} \left(\sum_{ s= T- |\mathcal{S}|+1}^T d_s - \sum_{r \in \mathcal{S}} x_r\right)^+$. 

 If $a_1 >0$, then $d_T >p_1$ and therefore the maximum achieving set $\mathcal{S}$ in definition of $F(\VEC p)$ must contain time index $1$. Let $\VEC p(1) = (p_1+a_1,p_2,\ldots,p_T)$. Then, $\mathcal{S}$ also achieves the maximum in the definition of $F(\VEC p(1))$. Therefore,
\begin{equation}
F(\VEC p(1)) = F(\VEC p) -a_1.
\end{equation}

Let $\VEC p(t-1)$ be the supply profile after the first $t-1$ purchase decisions. If $a_t >0$, then there exists a set $\mathcal{T} \subset \{1,\ldots,t\}$ containing $t$ such that 
\begin{equation}
  \sum_{s=T-|\mathcal{T}|+1}^T d_s - \sum_{r \in \mathcal{T}} p_r(t-1) >0
\end{equation}
Let $\mathcal{T}^*$ be the set for which the above difference is the largest. Then, the maximum achieving set $\mathcal{S}$ in the definition of $F(\VEC p(t-1))$ must contain $\mathcal{T}^*$. Define $\VEC p(t) = (p_1+a_1,p_2+a_2,\ldots,p_t+a_t,p_{t+1}\ldots,p_T)$. Then, $\mathcal{S}$ also achieves the maximum in the definition of $F(\VEC p(t))$. Therefore,
\begin{equation}
F(\VEC p(t)) = F(\VEC p(t-1)) -a_t.\label{eq:costbound2}
\end{equation}
Combining \eqref{eq:costbound2} for all $t$ gives
\begin{equation}
F(\VEC p(T)) = F(\VEC p) - \sum_{t=1}^T a_t \notag \quad
\implies \quad  \sum_{t=1}^T a_t = F(\VEC p) - F(\VEC p(T)). \label{eq:costbound3}
\end{equation}
Note that 
\begin{align}
F(\VEC p) =\max_{\mathcal{S} \subset \{1,2,\ldots,T\}} \left(\sum_{ s= T- |\mathcal{S}|+1}^T d_s - \sum_{r \in \mathcal{S}} p_r\right)^+ \notag \\
= \max_{1 \leq t \leq T}\left(\sum_{ s= T-t+1}^T d_s - \sum_{ i= T-t+1}^T p^{\downarrow}_r\right)^+
\end{align}
Also note that since $\VEC p(T) = \VEC p+\VEC a$, and $\VEC p + \VEC a \morethan^w \VEC d$, it implies that 
\[F(\VEC p(T)) = \max_{1 \leq t \leq T}\left(\sum_{ s= T-t+1}^T d_s - \sum_{ r= T-t+1}^T p^{\downarrow}_r(T) \right)^+ =0.\] 
Therefore, \eqref{eq:costbound3} amounts to
\[\sum_{t=1}^T a_t  =\max_{1 \leq t \leq T}\left(\sum_{ s= T-t+1}^T d_s - \sum_{ r= T-t+1}^T p^{\downarrow}_r \right)^+ = \max_{1 \leq t \leq T}\left(\sum_{ s=t}^T d_s - \sum_{ r= t}^T p^{\downarrow}_r \right)^+ \]
which proves the theorem. 

\subsection{Proof of Theorem \ref{thm:convexity}} \label{sec:convexity}
In order to prove the convexity of $J(\VEC y)$, it suffices to prove the convexity of 
\[ \max_{t} \left(\sum_{ s \geq t} (d_s -(\VEC r +\VEC y)^{\downarrow}_s)\right)^+\]
for each possible realization $\VEC r$ of the renewable supply. Since maximum of convex functions is convex, it is sufficient to prove that 
\begin{equation} \label{eq:convexity1} \left(\sum_{ s \geq t} (d_s -(\VEC r +\VEC y)^{\downarrow}_s)\right)^+
\end{equation}
is convex for all $t$. \eqref{eq:convexity1} can be written as
\begin{align}
&= \sum_{ s =t}^{T} d_s - \min_{\substack{\mathcal{S} \subset \{1,2,\ldots,T\},\\ |\mathcal{S}| = T-t+1}} \left( \sum_{k \in \mathcal{S}} (r_k +y_k)\right) \notag \\
&=\sum_{ s =t}^{T} d_s + \max_{\substack{\mathcal{S} \subset \{1,2,\ldots,T\},\\ |\mathcal{S}| = T-t+1 }} \left( -\sum_{k \in \mathcal{S}} (r_k +y_k)\right) \label{eq:convexity2}
\end{align}
Since \eqref{eq:convexity2} is a maximum of affine functions of $\VEC y$, it implies that it is convex in $\VEC y$. This completes the proof.

\subsection{Proof of Theorem \ref{thm:soc_welfare}}\label{sec:soc_welfareproof}
\proof{\bf Proof of Part (a)}
Consider any choice of $\VEC y$ so that the total supply is $\VEC x = \VEC r + \VEC y$. Without loss of generality, assume that $\VEC x$ is arranged in non-increasing order.
For $t=1,\ldots,T$, define $\delta_t=U(t)-U(t-1)$ and define $\delta_0=0$. Then, $\delta_0 \leq \delta_1 \leq \delta_2 \leq \ldots \leq \delta_T$.
For any demand duration vector that can be served with $\VEC x$, the total utility of consumers is 
\begin{eqnarray}
 \sum_{t=1}^{T-1}(d_t-d_{t+1}) U(t) + d_TU(T) & = &  \sum_{t=1}^T d_t(U(t)-U(t-1)) = \sum_{t=1}^T d_t\delta_t  \nonumber \\
 & = & \delta_1(d_1+d_2+\ldots+d_T) + (\delta_2-\delta_1)(d_2+d_3+\ldots+d_T)  \nonumber \\
 & & \quad \quad \quad +  (\delta_3-\delta_2)(d_3+\ldots+d_T) + \ldots + (\delta_T-\delta_{T-1})d_T \nonumber \\
 & = &   \sum_{j=1}^{T}(\delta_j-\delta_{j-1})\left[\sum_{t=j}^T d_t\right]\label{eq:soc_welfare1}
\end{eqnarray}
where the inequality follows from the fact that $\delta_j -\delta_{j-1} \geq 0$ and $\VEC x \morethan^w \VEC d$. The profile $d^*_t = x_t$ for all $t$ is a valid demand profile that achieves the upper bound on consumer utility. Therefore, the optimal contracts offered when the supply is $\VEC x$ are: the first $x_T$ consumers get $T$ duration, the next $x_{T-1} -x_{T}$ get $T-1$ and in general $x_{i}-x_{i+1}$ get duration $i$. In total, $x_1$ consumers are served. If $\VEC x = \VEC r$, the first $r_T$ consumers get $T$ duration, the next $r_{T-1} -r_{T}$ get $T-1$ and in general $r_{i}-r_{i+1}$ get duration $i$. Since $\VEC x = \VEC r+\VEC y$ and $\VEC y \geq \VEC 0$, each consumer's contract under $\VEC x$ is no less than its contract under $\VEC r$. Therefore, the social welfare problem can be restated as follows: Let $(h_1,h_2,\ldots,h_N)$ be the utility maximizing contracts for the $N$ consumers under the supply $\VEC r$. Choose a vector $\VEC z = (z_1,z_2,\ldots,z_N)$ of contract increments to maximize
\begin{equation}
\sum_{i=1}^N \left(U(h_i+z_i) - c^{da}z_i\right)
\end{equation}
The summation can be maximized by maximizing each term separately which gives that $z_i = T -h_i$ if $h_i \geq k^*$ and $0$ otherwise. 
\endproof
  
\proof{\bf Proof of part (b)}
Consider any choice of $\VEC y$ so that the total supply is $\VEC x = \VEC r + \VEC y$. Let $\sum_{t=1}^T x_t = Nu(x) + S(x)$ where $u(x)$ is a non-negative integer and $S(x) < N$. Then, the non-increasing increment property of the utility implies that total consumer utility is maximized if $S(x)$ consumers get contracts of duration $u(x)+1$ and $N-S(x)$ consumers get contracts of duration $u(x)$. When $\VEC x = \VEC r$, $\sum_t r_t$ consumers get contract of duration $1$. Thus, each consumer's contract under $\VEC x$ is no less than its contract under $\VEC r$. Therefore, the social welfare problem can be restated as follows: Let $(h_1,h_2,\ldots,h_N)$ be the utility maximizing contracts for the $N$ consumers under the supply $\VEC r$. Choose a vector $\VEC z = (z_1,z_2,\ldots,z_N)$ of contract increments to maximize
\begin{equation}
\sum_{i=1}^N \left(U(h_i+z_i) - c^{da}z_i\right)
\end{equation}
The summation can be maximized term-by-term which yields $z_i = k^* -h_i$. 
\endproof

\subsection{Proof of Theorem \ref{thm:market}} \label{sec:market_proof}

\proof{\bf Proof of Part (a)}
Subject to prices $\pi(t)=U(t)$, the profit maximization problem is given by
\begin{align}
    &\max_{\VEC n \geq 0,~ \VEC y \geq 0}  \sum_{t=1}^{T}(n_tU(t) - c^{da}y_t)  \notag \\
    & \mbox{subject to~} d_t = \sum_{i=t}^T n_i \notag \\
     & \hspace{20pt} \VEC r + \VEC y \morethan^w \VEC d \notag 
\end{align}
but  $n_i=\sum_{j=1}^N \mathds{1}_{\{h_j=i\}}$, then $\sum_{t=1}^{T}n_tU(t)$ is equal to $\sum_{i=1}^N U(h_i)$. Thus, the solution of the profit maximization problem is equivalent to the solution of the social welfare optimization problem. Subject to these prices, consumers obtain zero welfare under any allocation. Hence, an efficient competitive equilibrium exists. 
\endproof

\proof{\bf Proof of part (b)}
Prices are given by $\pi(h)=\min(c^{da},U(1))h=\mu h$. The case in which $\pi(h)=U(1)h$ is straightforward. If  $\pi(h)=c^{da}h$, it is clear that consumers maximize their surplus by choosing contracts of duration $k^*$. This is a result of the non-increasing increments on $U(h)$ and the condition $U(k^*)-U(k^*-1) \geq c^{da}$. Hence, the bundle of contracts $n_k*=N$ maximizes consumers surplus. Subject to prices $\pi(h)=c^{da}h$, supplier revenue is given by $c^{da} \sum_{t=1}^T n_t t - c^{da}\sum_{t=1}^T y_t=c^{da} \sum_{t=1}^T d_t - c^{da}\sum_{t=1}^T y_t=c^{da}(\sum_{t=1}^T d_t -\sum_{t=1}^T y_t) \leq c^{da} \sum_{t=1}^T r_t$. The bundle, $n_k*=N$ achieves the upper bound. Hence, an efficient competitive equilibrium exists.  
\endproof

\comment
{
\subsection{Proof of Theorem \ref{thm:rt-market}} \label{sec:rtm_proof}
We analyze first the convex case. By the myopic nature of the decisions, markets for each time can be analyzed independently as a one shot market. For a particular time $t$, we define as $x_i$ the number of consumers that received $i$ slots on the first $(t-1)$ markets. Hence, the market for time $t$ will be a one shot market in which an additional unit of power will have different value for different $x_i$-type consumers. Effectively, we can pose this situation as having a set of heterogenous consumers with utility functions given by $\hat{U}_{i}(z)=(U(i+1)-U(i))z$ in which $z \in \{0,1\}$. There are available $p_t$ units of energy-per-slot for free, any additional unit has a unitary cost of $C$. The market price and the decision of using or not additional power will depend on the realization of power $p_t$, the configuration of customers $x_i$ and the relationship of $C$ respect to $\hat{U}_{i}(z)$ as illustrated in Fig. \ref{fig-spot}.
\begin{figure}[h]
\centering
\includegraphics[width=3in]{Figures/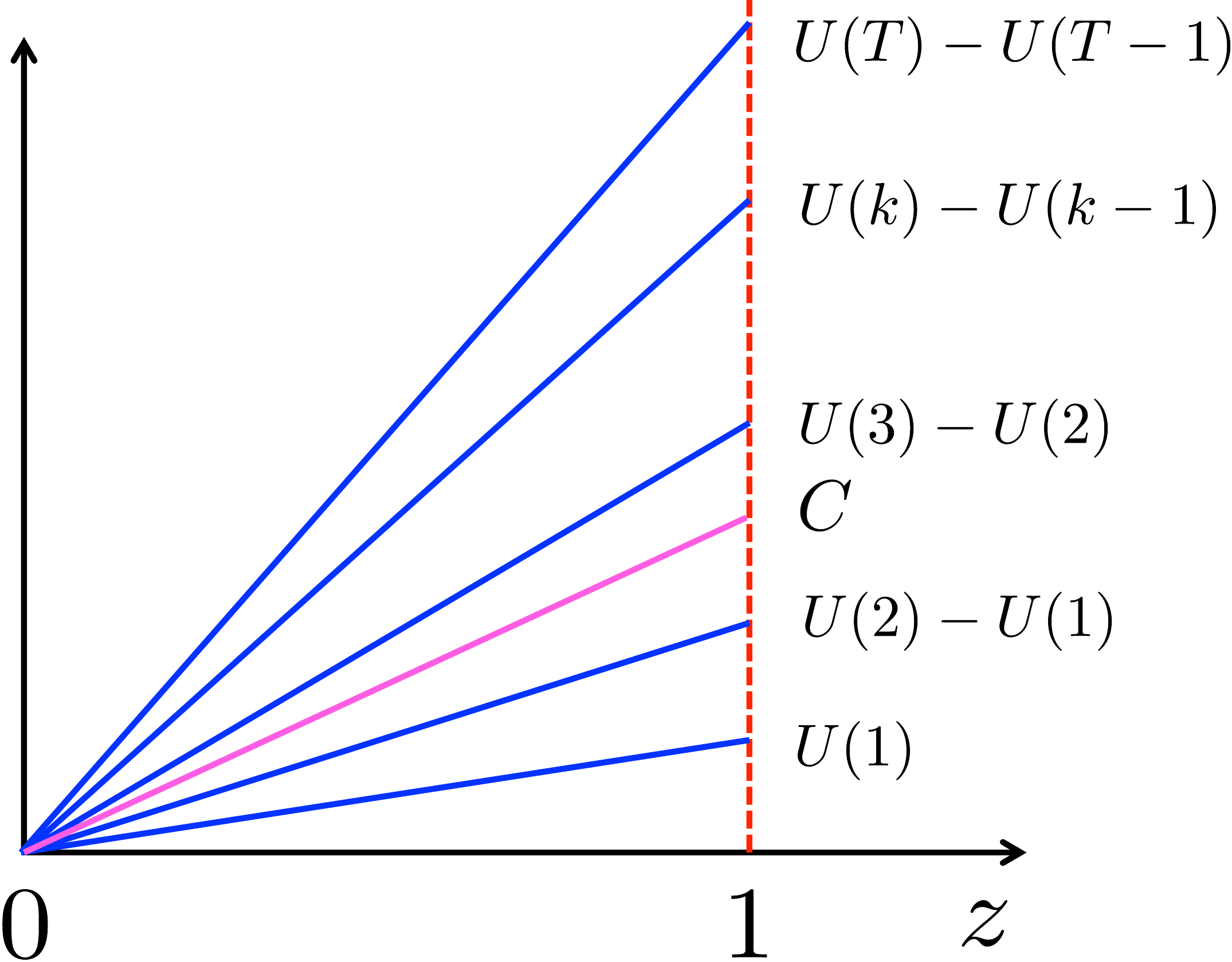}
\caption{One shot market.}
\label{fig-spot}
\end{figure}
We call $k^*$ the minimum value in which $U(k^*)-U(k^*-1) \geq C$. Clearly in the first $k^*-1$ markets no additional power is going to be used and the price will be given by
\begin{align}
\pi_t =\left \{\begin{array}{ll} & U(t)-U(t-1) ~~\mbox{if $p_t < x_{t-1}$}\\
                                  & U(t-1) -U(t-2) ~~ \mbox{if $x_{t-1} \leq p_t < x_{t-1} + x_{t-2} $}\\
                                    & U(t-2) -U(t-3) ~~ \mbox{if $x_{t-1}+x_{t-2} \leq p_t < x_{t-1} + x_{t-2} +x_{t-3}$}\\
                                      & \cdots\\
                                      &U(1) ~~ \mbox{if $x_{t-1}+x_{t-2}+\ldots + x_1 \leq p_t < N$}\\
                                        \end{array}
                                       \right. \label{eq:soc_welfare1a}
\end{align} 
In markets for $t > k^*$, additional power will be used anytime that $p_t$ is less than $x_{t-1}+x_{t-2}+\ldots+x_{k^*}$ resulting in a market price $\pi_t=C$. If $p_t$ is more than $x_{t-1}+x_{t-2}+\ldots+x_{k^*}$ no additional power will be used and the market price will be setting by the appropriate marginal utility, as in the previous case. Summarizing, the market price for any time slot $t$ can be characterized as follows

\begin{align}
\pi_t =\left \{\begin{array}{ll} & \min(C,U(t)-U(t-1)) ~~\mbox{if $p_t < x_{t-1}$}\\
                                  &  \min(C,U(t-1) -U(t-2)) ~~ \mbox{if $x_{t-1} \leq p_t < x_{t-1} + x_{t-2} $}\\
                                    &  \min(C,U(t-2) -U(t-3)) ~~ \mbox{if $x_{t-1}+x_{t-2} \leq p_t < x_{t-1} + x_{t-2} +x_{t-3}$}\\
                                      & \cdots\\
                                      & \min(C,U(1)) ~~ \mbox{if $x_{t-1}+x_{t-2}+\ldots + x_1 \leq p_t < N$}\\
                                        \end{array}
                                       \right. \label{eq:soc_welfare1a}
\end{align}

The allocation resulting from the spot market may not be efficient, as the following example shows. We consider a
market for two periods, a single consumer with utility function given by $U(1)=0$, $U(2)=10$ and cost of additional supply $C=2$. The free supply is given by $p_1=1$ and $p_2=0$. In the real-time market setting for $t=1$, given that $C>U(1)$, consumer do not get any power. For $t=2$ consumer get one unit of power. Hence, the total welfare is zero. On a forward duration-differentiated market, the consumer will get 1 unit of power in each time, the total welfare will be given by $U(2)-C=8$. The reason for this loss of efficiency on the real-time market setting is the myopic behavior. 
\\
\\
In the concave case, let $k^*$ be the largest $k \in \{1,\ldots,T\}$ for which
\begin{equation}
{U(k)-U(k-1)} \geq C
\label{cc}
\end{equation}
illustrated in Fig. \ref{fig-spot2}. If such a value does not exist, take $k^*=0$. In the first $k^*$ markets, the price will be equal to $C$ and all consumers will get a slot of power. At the end of the $k^*$ market, all the consumers will have $k^*$ slots of power. In the markets for $t > k^*$, it is clear by the condition \ref{cc}, that no additional power will be ever used. The price for these markets will be given by $\pi(t)=U(k^*)-U(k^*-1)$.

\begin{figure}[h]
\centering
\includegraphics[width=3in]{Figures/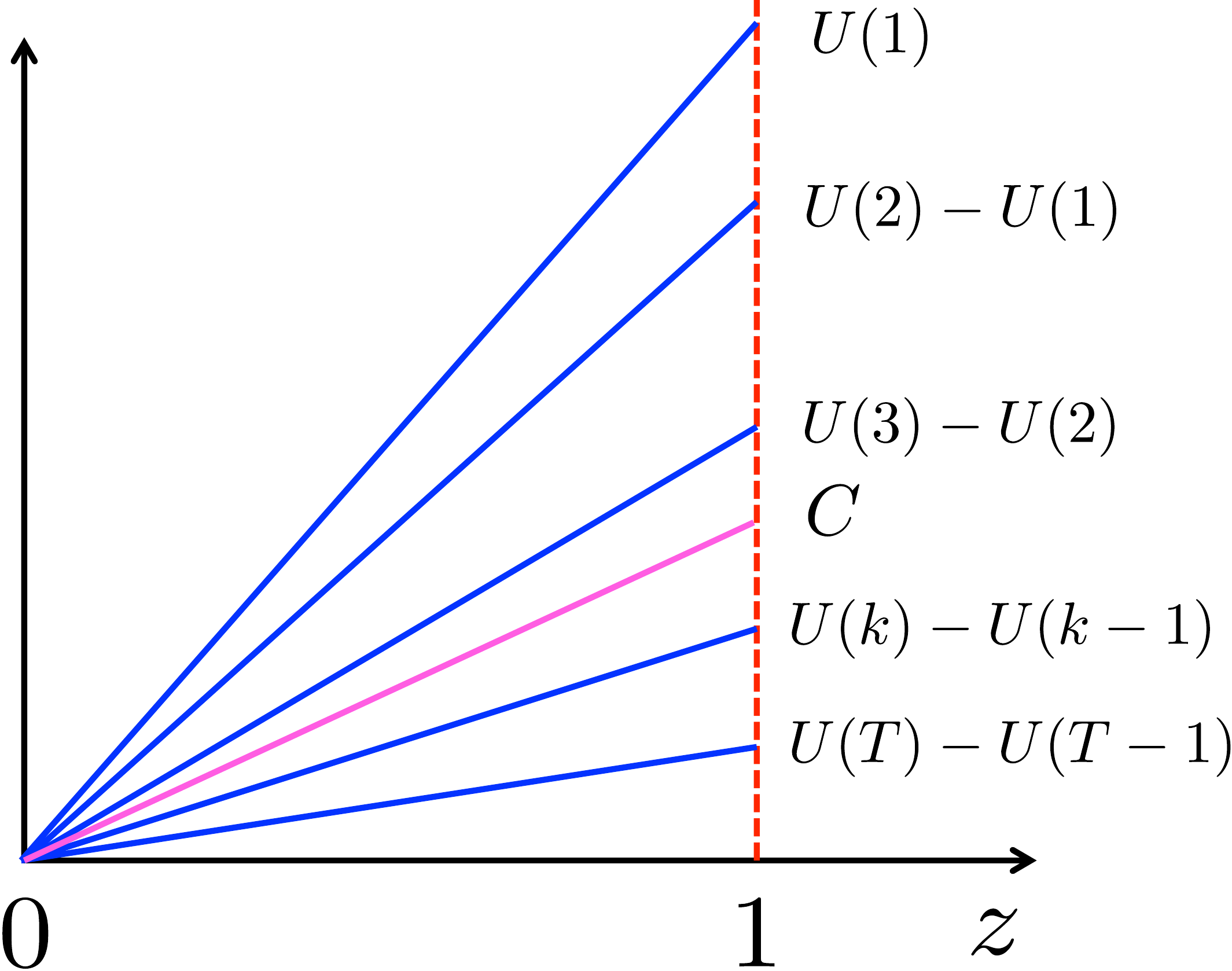}
\caption{One shot market.}
\label{fig-spot2}
\end{figure}

The allocation of the real-time market might also be inefficient, as the following example shows. We consider a
market for two periods, a single consumer with utility function given by $U(1)=5$, $U(2)=5$ and cost of additional supply $C=2$. The free supply is given by $p_1=0$ and $p_2=1$. In the real-time market setting for $t=1$, given that $C<U(1)$, consumer will get a unit of power. For $t=2$ consumer is indifferent to get or not the available power. Hence, the total welfare in this case is $U(2)-C=3$. On a forward duration-differentiated market, the consumer will get 1 unit of power only in $t=2$ and the total welfare will be given by $U(1)=5$. As in the previous example, the reason for this loss of efficiency on the real-time market is the myopic behavior.
}

\subsection{Proof of Theorem \ref{thm:var_rate}} \label{sec:var_rate}

Clearly, any supply allocation $A(t)$ that respects \eqref{eq:var_rate2} also satisfies \eqref{eq:var_rate1}. 

To prove the converse,  first assume $r>0$. Consider an $A(t)$ satisfying \eqref{eq:var_rate1}. Then, $A(t) > 0$ for at least $k+1$ time slots and $A(t) = m$ for at most $k$ time slots. Pick $k+1$ time slots with largest value of $A(t)$. Define $A^1(t) =1$ at the selected slots and $0$ otherwise. Let $B(t) = A(t) - A^1(t)$. Then, $B(t) \in \{0,1,\ldots,m-1\}$ and $\sum_t B(t) = k(m-1) +(r-1)$. If $r=0$, then $A(t) > 0$ for at least $k$ time slots and $A(t) = m$ for at most $k$ time slots. Pick $k$ time slots with largest value of $A(t)$. Define $A^1(t) =1$ at the selected slots and $0$ otherwise. Let $B(t) = A(t) - A^1(t)$. Then, $B(t) \in \{0,1,\ldots,m-1\}$ and $\sum_t B(t) = k(m-1) $. Thus, we can always write $A(t)$ as
\begin{equation}
A(t) = A^1_t + B(t),
\end{equation}
where $A^1_t \in \{0,1\}$, $\sum_t A^1_t = k + \mathds{1}_{\{r >0\}}$, $B(t) \in \{0,1,\ldots,m-1\}$ and $\sum_t B(t) = k(m-1) + (r-1)^+$. Now, if $(r-1)>0$, then $B(t) >0$ for at least $k+1$ slots and $B(t) = m-1$ for at most $k$ slots. Pick $k+1$ time slots with largest value of $B(t)$. Define $A^2(t) =1$ at the selected slots and $0$ otherwise. Let $C(t) = B(t) - A^2(t)$. Then, $C(t) \in \{0,1,\ldots,m-2\}$ and $\sum_t B(t) = k(m-2) +(r-2)$. If $(r-1)^+ =0$, then $B(t) > 0$ for at least $k$ time slots and $B(t) = m$ for at most $k$ time slots. Pick $k$ time slots with largest value of $B(t)$. Define $A^2(t) =1$ at the selected slots and $0$ otherwise. Let $C(t) = B(t) - A^2(t)$. Then, $C(t) \in \{0,1,\ldots,m-2\}$ and $\sum_t C(t) = k(m-2) $. Thus, we can write $A(t)$ as
\begin{equation}
A(t) = A^1_t + A^2_t +  C(t),
\end{equation}
where $A^2_t \in \{0,1\}$, $\sum_t A^2_t = k + \mathds{1}_{\{r >1\}}$,$C(t) \in \{0,1,\ldots,m-2\}$ and $\sum_t C(t) = k(m-2) + (r-2)^+$. Continuing sequentially, we can decompose $A(t)$ into $A^1(t),\ldots, A^m(t)$.

\bibliographystyle{IEEEtran}
\bibliography{timedif}

\end{document}